\documentclass[draftcls,onecolumn]{IEEEtran}

\addtolength{\topmargin}{9mm}

\usepackage[utf8]{inputenc}
\usepackage[T1]{fontenc}
\usepackage{url}
\usepackage{ifthen}
\usepackage{cite}
\usepackage[cmex10]{amsmath} 
\usepackage{graphicx}
\usepackage{amsmath, amsthm, amssymb, amsfonts}
\usepackage{tabularx}
\usepackage[linesnumbered,ruled]{algorithm2e}
\usepackage{algpseudocode}
\usepackage{subfigure}
\usepackage{url}
\usepackage{xcolor}
\usepackage{multirow}
\usepackage{colortbl}
\newtheorem{definition}{Definition}
\newtheorem{lemma}{Lemma}

\newtheorem{theorem}{Theorem}
\newtheorem{remark}{Remark}

\hyphenation{op-tical net-works semi-conduc-tor}
\begin{document}
\title{Generalized Universal Coding of Integers}

\author{Wei Yan,
        Sian-Jheng Lin,~\IEEEmembership{Member,~IEEE}
        and Yunghsiang S. Han,~\IEEEmembership{Fellow,~IEEE,}
\thanks{This work was partially presented at the 2021 IEEE Information Theory Workshop.

W. Yan is with the School of Cyber Science and Technology, University of Science and Technology of China (USTC), Hefei, Anhui, China, email: yan1993@mail.ustc.edu.cn}
\thanks{S.-J. Lin is with the Huawei Technology Co., Ltd, Hong Kong, China, email:lin.sian.jheng1@huawei.com}
\thanks{Y. S. Han is with Shenzhen Institute for Advanced Study, University of Electronic Science and Technology of China, Shenzhen, China, and also with the College of Electrical Engineering and Computer Science, National Taipei University, Taipei, Taiwan, email: yunghsiangh@gmail.com.
}
}
\maketitle
\begin{abstract}
Universal coding of integers~(UCI) is a class of variable-length code, such that the ratio of the expected codeword length to $\max\{1,H(P)\}$ is within a constant factor, where $H(P)$ is the Shannon entropy of the decreasing probability distribution $P$. However, if we consider the ratio of the expected codeword length to $H(P)$, the ratio tends to infinity by using UCI, when $H(P)$ tends to zero. To solve this issue, this paper introduces a class of codes, termed generalized universal coding of integers~(GUCI), such that the ratio of the expected codeword length to $H(P)$ is within a constant factor $K$.
First, the definition of GUCI is proposed and the coding structure of GUCI is introduced.
Next, we propose a class of GUCI $\mathcal{C}$ to achieve the expansion factor $K_{\mathcal{C}}=2$ and show that the optimal GUCI is in the range $1\leq K_{\mathcal{C}}^{*}\leq 2$.
Then, by comparing UCI and GUCI, we show that when the entropy is very large or $P(0)$ is not large, there are also cases where the average codeword length of GUCI  is shorter.
Finally, the asymptotically optimal GUCI is presented.
\end{abstract}
\IEEEpeerreviewmaketitle

\section{Introductions}\label{sec_intro}
For lossless source coding, there are three major categories, termed variable-to-fixed length~(VF) codes (e.g. Tunstall code~\cite{T67}), fixed-to-variable length~(FV) codes (e.g. Huffman code~\cite{H52}) and variable-to-variable length~(VV) codes (e.g. Khodak code~\cite{K72,VV08}).
As their name implies, VF codes encode a variable-length sequence of source symbols into a constant-length codeword, and FV codes encode a constant-length sequence of source symbols into a variable-length codeword. In particular, variable-length codes map the source symbols to a variable number of bits, and this is the most important type of FV codes.
VF and FV codes are the special cases of VV codes, and the main research on VV codes focuses on redundancy rates~\cite{K72,VV08,VV02,VV04}.

In particular, universal coding of integers~(UCI) is a variable-length code for the discrete memoryless sources with the infinite alphabet, and the probability distribution of sources does not need prior knowledge. 
In 1968, Levenshtein~\cite{L68} proposed the first UCI, although UCI was not yet defined then. 
In 1975, Elias~\cite{Elias75} established the fundamental framework of UCI. Elias considered a discrete memoryless source $S=(P,\mathcal{A})$ with a countable alphabet set $\mathcal{A}\triangleq\mathbb{N}^{+}=\{1,2,3,\cdots\}$
and a decreasing probability distribution (DPD) $P$ of $\mathbb{N}^{+}$ (i.e., $\sum_{n=1}^{\infty}P(n)=1$, and $P(m)\geq P(m+1)\geq 0$, for all $m\in \mathbb{N}^{+}$).
Let $H(P)=-\sum_{n=1}^{\infty}P(n)\log_{2}P(n)$ denote the Shannon entropy of $P$.
Let $\mathcal{C}$ be a variable-length code for the source $S=(P,\mathbb{N}^{+})$, and it maps the positive integers $\mathbb{N}^{+}$ onto binary codewords $\{0,1\}^{*}$.
Let $L_{\mathcal{C}}(\cdot)$ denote the length function so that $L_{\mathcal{C}}(m)=|\mathcal{C}(m)|$, 
for all $m\in \mathbb{N}^{+}$.
Furthermore, $E_{P}(L_{\mathcal{C}})=\sum_{n=1}^{\infty}P(n)L_{\mathcal{C}}(n)$ denotes the expected codeword length of $\mathcal{C}$. We say that $\mathcal{C}$ is \emph{universal} if
\begin{equation}\label{e1}
\frac{E_{P}(L_{\mathcal{C}})}{\max\{1,H(P)\}}\leq K_{\mathcal{C}},
\end{equation}
for all DPD $P$ with $H(P)<\infty$.
$K_{\mathcal{C}}$ is termed \emph{expansion factor} of UCI $\mathcal{C}$ and $K_{\mathcal{C}}^{*}\triangleq\inf\{K_{\mathcal{C}}\}$ is termed \emph{minimum expansion factor} of UCI $\mathcal{C}$.
Moreover, $\mathcal{C}$ is called \emph{asymptotically optimal} if $\mathcal{C}$ is universal
and there exists a function $R_{\mathcal{C}}(\cdot)$ such that
\begin{equation}\label{eq6}
  \frac{E_{P}(L_{\mathcal{C}})}{\max\{1,H(P)\}}\leq R_{\mathcal{C}}(H(P)),
\end{equation}
for all DPD $P$ with $H(P)<\infty$ and
\[
  \lim\limits_{H(P)\to +\infty}R_{\mathcal{C}}(H(P))=1.
\]

UCI has two main categories~\cite{C1990}, namely \emph{message length strategy} and \emph{flag pattern strategy}.
The $\gamma$ code, $\delta$ code and $\omega$ code, proposed by Elias~\cite{Elias75}, belong to the message length strategy.
This strategy UCI is mainly the recursive code to minimize $L_{\mathcal{C}}(m)$ for large $m\in\mathbb{N}$.
For example, two classes of UCIs proposed by Stout~\cite{S80} to improve $\omega$ code for large $m$.
For another example, Yamamoto~\cite{Y00} cleverly designed a delimiter with a length greater than $1$ to construct a new class of UCI whose length function satisfies
\[
  L_{\mathcal{C}}(m)<\log_{2}m+\log_{2}(\log_{2}m)+\cdots+\log_{2}^{t^{*}(m)}m,
\]
where $t^{*}(m)$ is the largest positive integer $t$ satisfying $\log_{2}^{t}m\geq0$.
However, the message length strategy UCI shall be used in an error-free environment.
Instead, the flag pattern strategy UCI, first studied by Lakshmanan~\cite{L81}, just make up for the problem, it has certain resynchronization properties.
Families of Fibonacci codes~\cite{AF87} is probably the most famous flag pattern strategy UCI, but it is not asymptotically optimal and the encoding and decoding are complex.
A new flag pattern strategy UCI, proposed by Wang~\cite{W88}, has improved in the above two aspects.
Yamamoto~\emph{et al.}~\cite{YO91} further improved and promoted Wang's coding scheme.
Furthermore, Amemiya~\emph{et al.}~\cite{A1993} provided a new \emph{group strategy} UCI.
The message length strategy coding can be regarded as a special group strategy coding.

Recently, \'{A}vila~\emph{et al.}~\cite{AS17} proposed a new family of UCI whose length function can reach the bounds  in~\cite{L68,AHK97,LC78} in the sense of a difference of a constant.
The work of Allison~\emph{et al.}~\cite{DCC21} focused on the universality of Wallace tree code. 
Yan \emph{et al.}~\cite{YL21,YL22} first studied the range of the minimum expansion factor of UCI.
If a class of UCI $\mathcal{C}$ has the smallest minimum expansion factor $K_{\mathcal{C}}^{*}$, then $\mathcal{C}$ is termed \emph{optimal UCI}.
The authors proved that the optimal UCI is in the range $2\leq K_{\mathcal{C}}^{*}\leq 2.5$, where $K_{\mathcal{C}}^{*}=2.5$ is achieved by $\iota$ code~\cite{YL22}. Today, UCI is used in many applications, such as biological sequencing data compression~\cite{DNA10,DNA13}, inverted file index~\cite{06} and unbounded search problems~\cite{BY76,AHK97}.

However, from \eqref{e1}, for a universal code, the ratio of $E_{P}(L_{\mathcal{C}})$ to $H(P)$ cannot be within a constant $K_{\mathcal{C}}$ when $H(P)$ approaches zero.
That is, UCI cannot satisfy the following inequality.
\begin{equation}\label{e2}
	\frac{E_{P}(L_{\mathcal{C}})}{H(P)}\leq K_{\mathcal{C}}.
\end{equation}
Precisely, when $H(P)$ is extremely small, the expected codeword length of a variable-length code is
\[
  E_{P}(L_{\mathcal{C}})=\sum_{n=1}^{\infty}P(n)L_{\mathcal{C}}(n)\geq \sum_{n=1}^{\infty}P(n)\cdot 1=1.
\]
Clearly, the left-hand side of \eqref{e2} tends to infinity when $H(P)$ tends to $0$.
Thus, a traditional UCI cannot meet the inequality \eqref{e2}, and the objective of this work is to propose a new class of code satisfying the inequality similar to \eqref{e2}. That is, the motivation of this work is to perfect the theoretical definition so that regardless of the size of the source entropy, there is a code whose average code length is less than a constant multiple of the entropy.

In this paper, we introduce a family of codes, called generalized universal coding of integers~(GUCI), which is a generalization of UCI via VV codes. In particular, GUCI can meet the inequality similar to \eqref{e2}.
The minimum expansion factor of GUCI is also studied. The major contributions are enumerated as follows.
\begin{enumerate}
\item The definition of GUCI and asymptotically optimal GUCI is presented.
\item A family of GUCI and asymptotically optimal GUCI is proposed.
\item In the proposed family of GUCI, a class of GUCI is proposed to achieve the expansion factor $2$. We also show that the optimal GUCI is in the range $1\leq K_{\mathcal{C}}^{*}\leq 2$.
\item The relationship between UCI and GUCI is discovered.
\item A sufficient condition for the average codeword length of GUCI to be shorter than UCI is obtained.
In addition, when Shannon entropy $H(P)$ is large or $P(0)$ is not large, 
there are still cases where the average codeword length of GUCI  is shorter. 
\end{enumerate}

In the rest of this paper,
Section \ref{pre} provides some background knowledge.
Section \ref{sec_lb} defines GUCI.
A family of GUCI is provided in Section \ref{sec_new}.
Section \ref{factor} discusses the expansion factor of GUCI.
Section \ref{length} compares the average code length of this family of GUCI and the original UCI.
Section \ref{sec_dis} studies the definition and property of asymptotically optimal GUCI.
Section \ref{sec_con} concludes this work.

\section{Preliminaries}\label{pre}
\subsection{Notations}
Let $\mathbb{N} \triangleq \{0\}\bigcup \mathbb{N}^{+}$ denote the set of non-negative numbers. Let $\alpha(m)$ denote the unary representation of the positive number $m$. For example, $\alpha(1)=1$, $\alpha(2)=01$ and $\alpha(5)=00001$. Let $\beta(m)$ denote the standard binary representation of a positive integer $m$. Let $[\beta(m)]$ denote the binary code by removing the most significant bit $1$ of $\beta(m)$. For example, $\beta(9)=1001$ and $[\beta(9)]=001$. Then, we obtain
\begin{equation*}
\begin{aligned}
 |\alpha(m)|&= m,  \\
  |\beta(m)|&= 1+\lfloor \log_{2}m\rfloor,  \\
 |[\beta(m)]|&= \lfloor \log_{2} m\rfloor,
\end{aligned}
\end{equation*}
for all $m\in \mathbb{N}^{+}$.

\subsection{Elias $\gamma$ code and the codeword lengths of some classical UCIs}
As Elias $\gamma$ code is frequently used in this paper, here is a detailed introduction to the specific structure of Elias $\gamma$ code. For other classic UCIs, please refer to~\cite{Elias75,YL21,YL22}.
Elias $\gamma$ code was introduced by Elias~\cite{Elias75}.
It is an encoding scheme for message length.
Elias $\gamma$ code: $\mathbb{N}^{+}\rightarrow \{0,1\}^{*}$ can be expressed as
\[
	\gamma(m)=\alpha(|\beta(m)|)[\beta(m)],
\]
for all $m\in \mathbb{N}^{+}$. The role of the leading $0'$s is to ensure that Elias $\gamma$ code is a prefix code.
The codeword length is given by
\begin{equation*}
\begin{aligned}
			|\gamma(m)|&=|\alpha(1+\lfloor \log_{2} m\rfloor)|+|[\beta(m)]|    \\
			&=1+\lfloor \log_{2} m\rfloor+\lfloor \log_{2} m\rfloor          \\
			&=1+2\lfloor \log_{2}m\rfloor.
\end{aligned}
\end{equation*}
For example, $\gamma(9)=0001001$ and $|\gamma(9)|=1+2\lfloor \log_{2}9\rfloor=7$.
Elias $\gamma$ code is universal, but it is not asymptotically optimal.
Next, a lemma about the length function of other classic UCIs is given.
\begin{lemma}~\cite{Elias75,YL21,YL22}
The following classic UCIs have $L_{\mathcal{C}}(1)=1$.
For all $2\leq n\in\mathbb{N}^{+}$,
\begin{enumerate}
\item the length function of $\delta$ code satisfies $L_{\delta}(n)=1+\lfloor\log_{2}n\rfloor+2\lfloor\log_{2}(1+\lfloor\log_{2}n\rfloor)\rfloor$;
\item the length function of $\eta$ code satisfies $L_{\eta}(n)= 3+\lfloor\log_{2}(n-1)\rfloor+\lfloor\frac{\lfloor\log_{2}(n-1)\rfloor}{2}\rfloor$;
\item the length function of $\theta$ code satisfies $L_{\theta}(n)= 3+\lfloor\log_{2}n\rfloor+\lfloor\log_{2}\lfloor\log_{2}n\rfloor\rfloor+\lfloor\frac{\lfloor\log_{2}\lfloor\log_{2}n\rfloor\rfloor}{2}\rfloor$;
\item the length function of $\iota$ code satisfies $L_{\iota}(n)= 2+\lfloor\log_{2}n\rfloor+\lfloor\frac{1+\lfloor\log_{2}n\rfloor}{2}\rfloor$;
\item the length function of $\omega$ code satisfies $L_{\omega}(n)=1+\sum_{m=1}^{t}(1+\lambda^{m}(n))$, where $\lambda(n)\triangleq\lfloor\log_{2}n\rfloor$, $\lambda^{m}$ denotes the $m$-fold compositions of $\lambda$,
and $t=t(n)\in\mathbb{N}^{+}$ is a uniquely integer satisfying $\lambda^{t}(n)=1$. Furthermore, $L_{\omega}(n)\leq 3+2\lfloor\log_{2}n\rfloor$.
\end{enumerate}
\end{lemma}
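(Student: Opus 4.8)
The plan is to derive each length function from the constructive definition of the corresponding code in \cite{Elias75,YL21,YL22}, using only the three identities $|\alpha(m)|=m$, $|\beta(m)|=1+\lfloor\log_{2}m\rfloor$ and $|[\beta(m)]|=\lfloor\log_{2}m\rfloor$ recorded in Section~\ref{pre}, and then to simplify the resulting nested floors. I would first dispatch $n=1$: each of these constructions maps $1$ to a single-bit codeword, so $L_{\mathcal{C}}(1)=1$ in all five cases, and from here on I assume $n\geq 2$.

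Next I would handle the four message-length codes, each of which is a concatenation of a self-delimiting prefix with a body block of $\lfloor\log_{2}m\rfloor$ bits (the body being taken from $m-1$, hence $\lfloor\log_{2}(m-1)\rfloor$ bits, in the $\eta$ case). For $\delta$ this is Elias' $\delta(m)=\gamma(|\beta(m)|)\,[\beta(m)]$, so substituting $|\gamma(k)|=1+2\lfloor\log_{2}k\rfloor$ at $k=|\beta(m)|=1+\lfloor\log_{2}m\rfloor$ and adding $|[\beta(m)]|=\lfloor\log_{2}m\rfloor$ gives $L_{\delta}$ at once. For $\eta$ and $\iota$ the unary-style prefix of $\gamma$ is replaced by a denser self-delimiting description of the body's bit-length whose length grows like half of that bit-length plus a constant, which is the source of the $\lfloor\cdot/2\rfloor$ terms; and $\theta$ is built on $\eta$ exactly as $\delta$ is built on $\gamma$, namely $\theta(m)=\eta(|\beta(m)|)\,[\beta(m)]$, which is why its formula carries the doubly nested logarithm. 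In every case the codeword length is (prefix length) $+$ (body length), so the remaining work is only floor bookkeeping, e.g.\ $2+\lfloor\log_{2}m\rfloor=1+|\beta(m)|$ and $\lfloor\frac{1+\lfloor\log_{2}m\rfloor}{2}\rfloor=\lfloor|\beta(m)|/2\rfloor$, once one is careful about which argument the inner code is applied to.

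For $\omega$ the recursion must be unrolled. By definition $\omega(n)$ is the terminator bit $0$ preceded, in turn, by the strings $\beta(\lambda^{0}(n)),\beta(\lambda^{1}(n)),\cdots,\beta(\lambda^{t-1}(n))$, where $t$ is the index with $\lambda^{t}(n)=1$; since $|\beta(\lambda^{m-1}(n))|=1+\lambda(\lambda^{m-1}(n))=1+\lambda^{m}(n)$, summing these lengths gives $L_{\omega}(n)=1+\sum_{m=1}^{t}(1+\lambda^{m}(n))$. It then remains to prove $L_{\omega}(n)\leq 3+2\lfloor\log_{2}n\rfloor$, equivalently $t+\sum_{m=1}^{t}\lambda^{m}(n)\leq 2+2\lambda(n)$. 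Writing $k=\lambda(n)$, $S(k)=k+\lambda(k)+\lambda^{2}(k)+\cdots+1$, and letting $c(k)$ be the number of terms of $S(k)$, this reduces to the claim $c(k)+S(k)\leq 2k+2$ for all $k\geq 1$, which I would prove by strong induction: for $k\geq 2$ one has $c(k)+S(k)=k+1+c(\lambda(k))+S(\lambda(k))\leq k+3+2\lfloor\log_{2}k\rfloor$, and this is at most $2k+2$ exactly when $2\lfloor\log_{2}k\rfloor+1\leq k$, which holds for every $k\geq 5$; so only the base cases $k\in\{1,2,3,4\}$ need to be verified by hand, and $k=4$ is the case where the inductive step would fail and where the bound $3+2\lfloor\log_{2}n\rfloor$ is attained with equality. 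I expect this last induction --- and in particular isolating $k=4$ as the unique obstruction --- to be the only part of the lemma that is not pure bookkeeping.
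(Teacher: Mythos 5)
First, a point of reference: the paper itself gives no proof of this lemma --- it is quoted from \cite{Elias75,YL21,YL22} as background --- so there is no internal argument to compare yours against, and I can only judge your reconstruction on its own terms.

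Your treatment of the two Elias codes is correct and complete. For $\delta$, the identity $\delta(m)=\gamma(|\beta(m)|)[\beta(m)]$ together with $|\gamma(k)|=1+2\lfloor\log_{2}k\rfloor$ and $|[\beta(m)]|=\lfloor\log_{2}m\rfloor$ yields the stated formula, and $\delta(1)=1$, $\omega(1)=0$ give the $L_{\mathcal{C}}(1)=1$ claims there. Your unrolling of $\omega$ is the standard one (groups $\beta(\lambda^{m}(n))$, $0\leq m\leq t-1$, of lengths $1+\lambda^{m+1}(n)$, plus the terminating $0$), and the reduction of $L_{\omega}(n)\leq 3+2\lfloor\log_{2}n\rfloor$ to $c(k)+S(k)\leq 2k+2$ with $k=\lambda(n)$, proved by strong induction via $2\lfloor\log_{2}k\rfloor+1\leq k$ for $k\geq 5$ with hand checks at $k\in\{1,2,3,4\}$, is valid; you also correctly identify $k=4$ (i.e.\ $n=16$) as the equality case. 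The genuine gap is in items 2--4: you never state what the $\eta$ and $\iota$ codes of \cite{YL21,YL22} actually are. The sentence that their prefix is a denser self-delimiting description of the body's bit-length ``whose length grows like half of that bit-length plus a constant'' is precisely the content of the formulas to be proved, so for $\eta$ and $\iota$ --- and consequently for $\theta$, whose reduction $\theta(m)=\eta(|\beta(m)|)[\beta(m)]$ is likewise asserted rather than sourced --- your argument assumes what it should establish. To close this you would need to import the explicit constructions from \cite{YL21,YL22} and repeat the bookkeeping you did for $\delta$; the formulas are internally consistent with your structural picture (e.g.\ $L_{\theta}(m)=L_{\eta}(1+\lfloor\log_{2}m\rfloor)+\lfloor\log_{2}m\rfloor$ reproduces item 3 from item 2), but consistency is not a derivation.
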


\subsection{Run-length encoding}
Run-length encoding~(RLE)~\cite{Capon59} is essentially a method of encoding run-length rather than encoding individual values. For example, a scan line, consisting of black pixels $B$ and white pixel $W$, may read as follows.
\begin{equation*}
\begin{aligned}
  & WWWWWWWBBBWWWWBWWWWW    \\
  & WWWWWWWWBBWWWWWWWWWW.
\end{aligned}
\end{equation*}
With RLE algorithm, it is encoded as
\[
 7W3B4W1B13W2B10W.
\]
Moreover, RLE can be modified to accommodate data properties.
For instance, the above scan line can also be encoded as
\[
(W,7,3,4,1,13,2,10),
\]
where the numbers can be encoded by prefix coding.

\subsection{Variable-to-fixed length codes}
The VF codes can be divided into two parts, termed parser and string encoder.
First, the parser partitions the source sequence into a concatenation of variable-length strings.
Each variable-length string belongs to a dictionary $\mathcal{D}$, which contains a set of strings.
Next, the string encoder maps the variable-length string $\alpha\in\mathcal{D}$ into the fixed-length string.
To ensure the completeness and uniqueness of the segmentation of the source sequence, $\mathcal{D}$ is required to be proper and complete.
\begin{definition}~\cite{FK72}
\begin{enumerate}
\item If every variable-length string $\alpha_{i}\in\mathcal{D}$ is not a prefix of another variable-length string $\alpha_{j}\in\mathcal{D}$, then $\mathcal{D}$ is termed proper.
\item If every infinite sequence has a prefix in $\mathcal{D}$, then $\mathcal{D}$ is termed complete.
\end{enumerate}
\end{definition}
For example, a proper and complete dictionary over $\{0,1\}$ is $\mathcal{D}=\{1,01,001,000\}$.

\subsection{Variable-to-variable length codes}
VV codes can be considered as a concatenation of VF codes and FV codes~\cite{VV08,VV02,VV04}. First, the VF encoder maps the variable-length string $\alpha\in\mathcal{D}$ into the fixed-length string, and then the FV encoder maps the fixed-length string into the variable-length string. 
Nishiara \emph{et al.}~\cite{VV00} define the almost surely complete~(ASC) dictionary and the corresponding VV code rate.
\begin{definition}~\cite{VV00}
\begin{enumerate}
\item If the probability that dictionary $\mathcal{D}$ has a prefix of the infinite sequence is $1$, then $\mathcal{D}$ is termed almost surely complete.
\item Let $\mathcal{C}$ be a VV code with a proper and ASC dictionary $\mathcal{D}$ and a VV encoder $\varphi$. Then the coding rate of $\mathcal{C}$ is
\[
  R_{\mathcal{C}}=\frac{\sum_{\alpha\in\mathcal{D}}P(\alpha)|\varphi(\alpha)|}{\sum_{\alpha\in\mathcal{D}}P(\alpha)|\alpha|}.
\]
\end{enumerate}
\end{definition}
An example of a dictionary $\mathcal{D}$ over $\{0,1\}$ that proper and ASC is $\mathcal{D}=\{1,01,001,0001,\cdots\}$. However, it is not complete, because the all-zero infinite sequence has no prefix in $\mathcal{D}$.

\section{Generalized universal coding of integers}\label{sec_lb}
In this section, we first define GUCI and then explain the rationality of the definition.
Let $\mathcal{C}=(\mathcal{D},\varphi)$ denote a VV code $\mathcal{C}$ with a proper and ASC dictionary $\mathcal{D}$ and a VV encoder $\varphi$.
A VV code $\mathcal{C}=(\mathcal{D},\varphi)$, that satisfies the prefix property, means that $\varphi(\beta)$ is not a prefix of $\varphi(\alpha)$ for any $\beta\neq\alpha\in\mathcal{D}$.
By introducing the VV codes, the definition of GUCI is as follows.
\begin{definition}\label{def:guci}
(GUCI) Let $\mathcal{C}=(\mathcal{D},\varphi)$ be a VV code that satisfies the prefix property, and it maps the non-negative integer strings $\mathbb{N}^{*}$ onto binary codewords $\{0,1\}^{*}$.
$\mathcal{C}$ is called generalized universal if there exists a constant $K_{\mathcal{C}}$ independent of $P$, for all DPD $P$ with $0<H(P)<\infty$, such that
\begin{equation}\label{e3}
\frac{R_{\mathcal{C}}}{H(P)}\leq K_{\mathcal{C}},
\end{equation}
where $R_{\mathcal{C}}$ is the coding rate of $\mathcal{C}$, $K_{\mathcal{C}}$ denotes the expansion factor of GUCI $\mathcal{C}$ and $K_{\mathcal{C}}^{*}\triangleq\inf\{K_{\mathcal{C}}\}$ denotes the minimum expansion factor of GUCI $\mathcal{C}$.
GUCI $\mathcal{C}$ is called optimal if $\mathcal{C}$ achieves the smallest minimum expansion factor $K_{\mathcal{C}}^{*}$.
\end{definition}

Next, we discuss the rationality of Definition \ref{def:guci}.
First of all, it is explained that the definition of GUCI is an extension of UCI.
Comparing inequality \eqref{e1} with inequality \eqref{e3}, 
since the denominator of the fraction on the left-hand side of the inequality removes the $\max$ function,
it is extended from this perspective. 
The numerators of the fractions on the left-hand side of the two inequalities are essentially equivalent.
Since the variable-length code is a special VV code, when the dictionary $\mathcal{D}$ of the VV code is equal to the alphabet $\mathbb{N}$, the VV code degenerates into a variable-length code.
Note that when the VV code $\mathcal{C}=(\mathcal{D},\varphi)=(\mathbb{N},\varphi)$, $\mathcal{C}$ is a variable-length code with the coding rate
\begin{equation*}
\begin{aligned}
     R_{\mathcal{C}}&=\frac{\sum_{\alpha\in \mathbb{N}}P(\alpha)|\varphi(\alpha)|}{\sum_{\alpha\in \mathbb{N}}P(\alpha)\times 1}   \\
                    &=\sum_{n=0}^{\infty}P(n)|\varphi(n)|           \\
                    &=E_{P}(L_{\varphi}).
\end{aligned}
\end{equation*}
At this time, $R_{\mathcal{C}}$ denotes the expected codeword length of $\mathcal{C}$.
Thus, $E_{P}(L_{\mathcal{C}})$ is a special $R_{\mathcal{C}}$.
Essentially, both $R_{\mathcal{C}}$ and $E_{P}(L_{\mathcal{C}})$ represent the average codeword length required for a source symbol. 
Therefore, for convenience, $R_{\mathcal{C}}$ and $E_{P}(L_{\mathcal{C}})$ can be collectively referred to as the average codeword length. 
Suppose a variable-length code $\mathcal{C}=(\mathbb{N},\varphi)$ is a class of GUCI (although such a variable-length code does not exist). Due to
\[
  \frac{E_{P}(L_{\varphi})}{\max\{1,H(P)\}}\leq\frac{R_{\mathcal{C}}}{H(P)}\leq K_{\mathcal{C}},
\]
$\mathcal{C}=(\mathbb{N},\varphi)$ is also a class of UCI.

Secondly, we prove that the expansion factor of GUCI has the same property as UCI.
In the groundbreaking paper~\cite{Elias75}, Elias proved that $E_{P}(L_{\mathcal{C}})\geq \max\{1,H(P)\}$.
Therefore, the expansion factor of UCI is greater than or equal to $1$.
Before giving the relevant theorem, we first introduce an important lemma.
\begin{lemma}~\cite{VV00} \label{lemma1}
Let $S=(P,\mathcal{A})$ denote a discrete memoryless source with entropy $H(P)<\infty$ and a countable alphabet $\mathcal{A}$.
Given a VV code $\mathcal{C}$ with a proper and ASC dictionary $\mathcal{D}$, than
\begin{equation*}
H(\mathcal{D})= H(P)\overline{l(\mathcal{D})},
\end{equation*}
where $H(\mathcal{D})=-\sum_{\alpha\in\mathcal{D}}P(\alpha)\log_{2}P(\alpha)$ denotes the entropy of $\mathcal{D}$
and $\overline{l(\mathcal{D})}=\sum_{\alpha\in\mathcal{D}}P(\alpha)|\alpha|$ denotes the average length of $\mathcal{D}$.
\end{lemma}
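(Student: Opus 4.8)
The plan is to view a dictionary word as the output of parsing an i.i.d.\ source stream and then to exploit the memoryless property together with the prefix-freeness of $\mathcal{D}$. Let $X_1,X_2,\dots$ be i.i.d.\ $\mathcal{A}$-valued random variables with common law $P$, and set $N=\min\{n\ge 1:X_1X_2\cdots X_n\in\mathcal{D}\}$, the length of the first parsed word. Because $\mathcal{D}$ is proper, no proper prefix of a dictionary word lies in $\mathcal{D}$, so at most one prefix of the stream belongs to $\mathcal{D}$; because $\mathcal{D}$ is ASC, such a prefix exists with probability $1$. Hence $N<\infty$ almost surely, the random word $\alpha=(X_1,\dots,X_N)$ is well defined, and $\Pr[\alpha=w]=P(w)$ for each $w\in\mathcal{D}$ (parsing cannot stop early, since no proper prefix of $w$ is in $\mathcal{D}$). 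Consequently $H(\mathcal{D})=-E[\log_2 P(\alpha)]$ and $\overline{l(\mathcal{D})}=E[N]=\sum_{i\ge 1}\Pr[N\ge i]$.

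Next I would break $-\log_2 P(\alpha)$ into per-symbol contributions. Since the source is memoryless, $P(w)=\prod_{i=1}^{|w|}P(w_i)$, so (on the almost-sure event that every parsed symbol has positive probability, with the usual convention $0\log_2 0=0$)
\[
-\log_2 P(\alpha)=\sum_{i=1}^{N}\bigl(-\log_2 P(X_i)\bigr)=\sum_{i=1}^{\infty}\mathbf{1}\{N\ge i\}\bigl(-\log_2 P(X_i)\bigr).
\]
Every summand is non-negative, so Tonelli's theorem permits the interchange of expectation and summation:
\[
H(\mathcal{D})=\sum_{i=1}^{\infty}E\bigl[\mathbf{1}\{N\ge i\}\,(-\log_2 P(X_i))\bigr].
\]

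The key step is an independence observation: the event $\{N\ge i\}$ is precisely the event that none of $X_1,\,X_1X_2,\,\dots,\,X_1\cdots X_{i-1}$ lies in $\mathcal{D}$, hence it is $\sigma(X_1,\dots,X_{i-1})$-measurable and therefore independent of $X_i$. Thus $E[\mathbf{1}\{N\ge i\}(-\log_2 P(X_i))]=\Pr[N\ge i]\cdot E[-\log_2 P(X_i)]=\Pr[N\ge i]\,H(P)$, and summing over $i$ gives
\[
H(\mathcal{D})=H(P)\sum_{i=1}^{\infty}\Pr[N\ge i]=H(P)\,E[N]=H(P)\,\overline{l(\mathcal{D})}.
\]
I expect the main obstacle to be making the prefix-free argument for the $\sigma(X_1,\dots,X_{i-1})$-measurability of $\{N\ge i\}$ fully rigorous and tidying up the boundary cases (the trivial case $i=1$; and sources with $H(P)=0$ or with $E[N]=\overline{l(\mathcal{D})}=\infty$, where the identity degenerates to $0=0$ or $+\infty=+\infty$), together with citing the correct convergence theorem so that the passage from $E\bigl[\sum_i\,\cdot\,\bigr]$ to $\sum_i E[\,\cdot\,]$ is legitimate when $H(P)<\infty$.
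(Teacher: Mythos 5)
Your proof is correct, but it takes a genuinely different route from the one in the paper's appendix. You model the first parsed word via the stopping time $N=\min\{n:X_1\cdots X_n\in\mathcal{D}\}$, use properness to get $\Pr[\alpha=w]=P(w)$ and $\overline{l(\mathcal{D})}=E[N]$, decompose $-\log_2 P(\alpha)$ into per-symbol self-informations, and conclude by a Wald-type argument: $\{N\ge i\}$ is $\sigma(X_1,\dots,X_{i-1})$-measurable, hence independent of $X_i$, and Tonelli justifies the interchange since all terms are non-negative (so the identity also holds as $+\infty=+\infty$ when $\overline{l(\mathcal{D})}=\infty$). This is essentially the Ekroot--Cover ``randomly stopped sequence'' argument, which the paper cites only for finite alphabets; your version shows it extends verbatim to countable alphabets, since nothing in the stopping-time/independence step uses finiteness of $\mathcal{A}$. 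The paper instead proves the lemma constructively: it shows the identity $H(\mathcal{S})=H(P)\overline{l(\mathcal{S})}$ is preserved under a single dictionary extension $\mathcal{S}\mapsto\mathcal{S}[\alpha]=(\mathcal{S}-\{\alpha\})\cup\alpha\mathcal{A}$, builds truncated proper and complete dictionaries $\mathcal{D}_n$ from $\mathcal{D}_1=\mathcal{A}$ by (possibly countably many) extensions, and then passes to the limit twice, with explicit sandwich arguments to exchange $H(\cdot)$ and $\overline{l(\cdot)}$ with the limits. Your approach buys brevity, automatic handling of the countable alphabet and of $\overline{l(\mathcal{D})}=\infty$, and a transparent reason the identity holds (conservation of self-information under optional stopping); the paper's approach stays elementary (no stopping times or Tonelli, only monotone sandwiching of series) and makes visible exactly how ASC and properness enter through the dictionary-tree structure. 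The points you flag as remaining work are genuinely minor: the measurability of $\{N\ge i\}$ is immediate from its description as ``no prefix of length $<i$ lies in $\mathcal{D}$,'' and the degenerate cases $H(P)=0$ or $E[N]=\infty$ cause no trouble because every step is an identity between non-negative extended reals.
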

Lemma~\ref{lemma1} was first introduced by Nishiara \emph{et al.}~\cite{VV00}, but they did not give complete proof.
The proof for the proper and complete dictionary and the finite alphabet can be found in~\cite{FK72}.
When studying the entropy of randomly stopped sequences, 
Ekroot \emph{et al.}~\cite{LT91} gave the proof of the proper and ASC dictionary and the finite alphabet version of Lemma~\ref{lemma1}.
In addition, a similar lemma, called \emph{conservation of entropy}~\cite{S99}, is for memory sources. Therefore, we will give the first complete proof of Lemma~\ref{lemma1} in the Appendix. 
Next, we give a theorem similar to $E_{P}(L_{\mathcal{C}})\geq\max\{1,H(P)\}$ in variable-length codes.
\begin{theorem}\label{thm1}
Let $S=(P,\mathcal{A})$ denote a discrete memoryless source  with entropy $H(P)<\infty$ and a countable alphabet $\mathcal{A}$. 
Assuming that a VV code $\mathcal{C}=(\mathcal{D},\varphi)$ satisfies the prefix property, then $R_{\mathcal{C}}\geq H(P)$.
\end{theorem}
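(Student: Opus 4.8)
The plan is to reduce the claim to the classical noiseless source coding bound applied to the dictionary $\mathcal{D}$, and then substitute the entropy conservation identity of Lemma~\ref{lemma1}. Write $\overline{l(\mathcal{D})}=\sum_{\alpha\in\mathcal{D}}P(\alpha)|\alpha|$ and recall $R_{\mathcal{C}}=\big(\sum_{\alpha\in\mathcal{D}}P(\alpha)|\varphi(\alpha)|\big)/\overline{l(\mathcal{D})}$ by definition, with $\overline{l(\mathcal{D})}\geq 1$ since every $\alpha\in\mathcal{D}$ has $|\alpha|\geq 1$. If $H(P)=0$ there is nothing to prove, so I would assume $H(P)>0$. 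The goal is thus to show $\sum_{\alpha\in\mathcal{D}}P(\alpha)|\varphi(\alpha)|\geq H(P)\,\overline{l(\mathcal{D})}$, after which dividing by $\overline{l(\mathcal{D})}$ finishes the argument.

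First I would use the prefix property: since $\varphi$ restricted to $\mathcal{D}$ is a (countable) binary prefix code, its lengths satisfy the Kraft inequality $\sum_{\alpha\in\mathcal{D}}2^{-|\varphi(\alpha)|}\leq 1$. Because $\mathcal{D}$ is proper and ASC, the numbers $\{P(\alpha)\}_{\alpha\in\mathcal{D}}$ form a probability distribution on $\mathcal{D}$ (the events ``the source sequence begins with $\alpha$'' are disjoint and their union has probability $1$). Then Gibbs' inequality (nonnegativity of relative entropy, equivalently the log-sum inequality) gives
\begin{equation*}
\sum_{\alpha\in\mathcal{D}}P(\alpha)|\varphi(\alpha)|\ \geq\ -\sum_{\alpha\in\mathcal{D}}P(\alpha)\log_{2}P(\alpha)\ =\ H(\mathcal{D}),
\end{equation*}
valid with both sides in $[0,+\infty]$. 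Next I would invoke Lemma~\ref{lemma1} for the proper and ASC dictionary $\mathcal{D}$, which yields $H(\mathcal{D})=H(P)\,\overline{l(\mathcal{D})}$. Chaining the two facts gives $\sum_{\alpha\in\mathcal{D}}P(\alpha)|\varphi(\alpha)|\geq H(P)\,\overline{l(\mathcal{D})}$, and dividing by $\overline{l(\mathcal{D})}\in[1,\infty)$ produces $R_{\mathcal{C}}\geq H(P)$.

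The hard part will be making the first step rigorous in the countably infinite setting: one must ensure the Kraft inequality holds for an infinite prefix code (the usual nested-interval argument extends, but it should be stated), that the series $\sum P(\alpha)|\alpha|$ and $\sum P(\alpha)|\varphi(\alpha)|$ converge (guaranteed by $R_{\mathcal{C}}$ being well-defined), and that the rearrangements implicit in the log-sum/Gibbs estimate are legitimate for nonnegative terms. Everything else is the one-line substitution of Lemma~\ref{lemma1}. Note also that this argument quietly reproves Elias's bound $E_P(L_{\mathcal{C}})\geq\max\{1,H(P)\}$ when $\mathcal{D}=\mathbb{N}$, since then $\overline{l(\mathcal{D})}=1$.
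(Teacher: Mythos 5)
Your proposal is correct and follows essentially the same route as the paper: it reduces the claim via Lemma~\ref{lemma1} to showing $\sum_{\alpha\in\mathcal{D}}P(\alpha)|\varphi(\alpha)|\geq H(\mathcal{D})$, and then combines the Kraft inequality for the prefix code $\{\varphi(\alpha)\}$ with nonnegativity of relative entropy. The only cosmetic difference is that the paper first normalizes the lengths to an auxiliary code $\psi$ with Kraft sum exactly $1$ before invoking $D(P\parallel P_\psi)\geq 0$, whereas you apply the Gibbs/log-sum estimate directly to the sub-normalized weights $2^{-|\varphi(\alpha)|}$, which is equally valid.
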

\begin{proof}
From Lemma~\ref{lemma1}, we obtain
\begin{equation*}
\begin{aligned}
     &R_{\mathcal{C}}=\frac{\sum_{\alpha\in\mathcal{D}}P(\alpha)|\varphi(\alpha)|}{\overline{l(\mathcal{D})}}\geq H(P)=\frac{H(\mathcal{D})}{\overline{l(\mathcal{D})}}   \\
\iff &\sum_{\alpha\in\mathcal{D}}P(\alpha)|\varphi(\alpha)|\geq H(\mathcal{D})=-\sum_{\alpha\in\mathcal{D}}P(\alpha)\log_{2}P(\alpha)       \\
\iff &\sum_{\alpha\in\mathcal{D}}P(\alpha)\Big(|\varphi(\alpha)|+\log_{2}P(\alpha)\Big)\geq 0     \\
\iff &\sum_{\alpha\in\mathcal{D}}P(\alpha)\log_{2}\frac{P(\alpha)}{2^{-|\varphi(\alpha)|}}\geq 0.
\end{aligned}
\end{equation*}
Below we prove that the last inequality holds.
As the codeword set $\{\varphi(\alpha)\mid \alpha\in\mathcal{D}\}$ satisfies the prefix property, we have
\[
  \sum_{\alpha\in\mathcal{D}}2^{-|\varphi(\alpha)|}\leq 1
\]
due to Kraft inequality~\cite{kraft}.
We can find the set $\{\psi(\alpha)\mid \alpha\in\mathcal{D}\}$ that satisfies
\[
  \sum_{\alpha\in\mathcal{D}}2^{-|\psi(\alpha)|}= 1
\]
and $|\varphi(\alpha)|\geq |\psi(\alpha)|$, for every $\alpha\in\mathcal{D}$.
Then,
\begin{equation*}
\begin{aligned}
\sum_{\alpha\in\mathcal{D}}P(\alpha)\log_{2}\frac{P(\alpha)}{2^{-|\varphi(\alpha)|}} & \geq \sum_{\alpha\in\mathcal{D}}P(\alpha)\log_{2}\frac{P(\alpha)}{2^{-|\psi(\alpha)|}}     \\
                                                                                     & =   D(P\parallel P_\psi)  \\
                                                                                     & \geq 0,
\end{aligned}
\end{equation*}
where the probability distribution represented by $P_{\psi}$ satisfies $P_{\psi}(\alpha)=2^{-|\psi(\alpha)|}$, for
every $\alpha\in\mathcal{D}$, and $D(P\parallel P_\psi)$ denotes relative entropy.
\end{proof}
From Theorem~\ref{thm1}, we obtain that the expansion factor of GUCI is greater than or equal to $1$.

\section{Explicit construction of GUCI}\label{sec_new}
In this section, the explicit structure of a family of GUCI is proposed.
The traditional UCI cannot satisfy inequality \eqref{e2}, as there is no constant $K_{\mathcal{C}}$ to meet inequality \eqref{e2} when $H(P)$ tends to $0$.
Thus, we pay attention to the case that $H(P)$ tends to $0$ on the construction of GUCI.
When $H(P)$ tends to $0$, $P(0)$ tends to $1$, the non-negative integer source string  will contain several consecutive $0'$s, that can be compressed by RLE. Precisely, the proposed VV code $\mathcal{C}=(\mathcal{D},\varphi)$ is the concatenation of RLE and UCI $\psi$. The encoding process is as follows.

First, the dictionary $\mathcal{D}_{RLE}$ selected by the encoder is
\[
  \mathcal{D}_{RLE}=\{ \underbrace {\l {00\cdots 0}}_i n | i\in \mathbb{N}, n\in \mathbb{N}^{+}  \}.
\]
Next, the encoder maps the variable-length string $\underbrace {\l {00\cdots 0}}_i n\in \mathcal{D}_{RLE}$ into the fixed-length string $(i+1,n)$.
Finally, the encoder maps string $(i+1,n)$ into $\psi(i+1)\psi(n)$ by UCI $\psi$.
That is, $\varphi_{\psi}(\underbrace {\l {00\cdots 0}}_i n)=\psi(i+1)\psi(n)$.

Obviously, $\mathcal{D}_{RLE}$ is proper and not complete, as the all-zero infinite sequence has no prefix in $\mathcal{D}_{RLE}$.
However, as the probability of the all-zero infinite sequence is $0$ due to $H(P)>0$, $\mathcal{D}_{RLE}$ is ASC.
We prove that the constructed VV code $\mathcal{C}=(\mathcal{D}_{RLE},\varphi_{\psi})$ is GUCI when UCI $\psi$ meets an easily reachable condition below.
First, we give two auxiliary lemmas.
\begin{lemma}\label{lemma3}
The following inequality holds.
\begin{equation*}
  -\log_{2}\Big(P(0)^{i}P(n)\Big)\geq  1+\log_{2}n+\log_{2}(i+1),
\end{equation*}
for all DPD $P$ and every $i\in \mathbb{N}^{+}$ and $n\in \mathbb{N}^{+}$.
\end{lemma}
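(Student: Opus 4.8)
The plan is to reduce the inequality to a clean upper bound on the product $P(0)^{i}P(n)$ and then to a single‑variable inequality in $i$. First I would dispose of the degenerate cases: if $P(n)=0$ the left‑hand side is $+\infty$ and there is nothing to prove; this also lets me assume $0<P(0)<1$, since $P(n)>0$ together with $P$ being a DPD on $\mathbb{N}$ rules out $P(0)=1$.

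Next I extract two structural facts. From $P(n)\le P(k)$ for $1\le k\le n$ and $\sum_{k\ge 1}P(k)=1-P(0)$ we get $nP(n)\le 1-P(0)$, hence $P(n)\le\bigl(1-P(0)\bigr)/n$, so that
\[
  P(0)^{i}P(n)\le\frac{P(0)^{i}\bigl(1-P(0)\bigr)}{n}.
\]
The second fact is a one‑line optimization: for fixed $i\in\mathbb{N}^{+}$, the map $x\mapsto x^{i}(1-x)$ on $[0,1]$ attains its maximum at $x=i/(i+1)$ with value $i^{i}/(i+1)^{i+1}$. Combining the two gives $P(0)^{i}P(n)\le\frac{1}{n}\cdot\frac{i^{i}}{(i+1)^{i+1}}$, and taking $-\log_{2}$,
\[
  -\log_{2}\bigl(P(0)^{i}P(n)\bigr)\ge\log_{2}n+(i+1)\log_{2}(i+1)-i\log_{2}i.
\]

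It then remains to verify $(i+1)\log_{2}(i+1)-i\log_{2}i\ge 1+\log_{2}(i+1)$, which simplifies to $i\log_{2}\!\bigl(1+\tfrac{1}{i}\bigr)\ge 1$. I would prove this by observing that $g(x)\triangleq x\ln(1+1/x)$ has $g'(x)=\ln(1+1/x)-\tfrac{1}{x+1}>0$ (using $\ln(1+t)>t/(1+t)$ for $t>0$), so $g$ is increasing on $[1,\infty)$ and $g(i)\ge g(1)=\ln 2$; equivalently $i\log_{2}(1+1/i)\ge 1$, with equality at $i=1$. Chaining the two displays yields exactly $-\log_{2}\bigl(P(0)^{i}P(n)\bigr)\ge 1+\log_{2}n+\log_{2}(i+1)$.

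The only genuinely delicate choice is which estimate to use for $P(n)$: the naive bound $P(n)\le 1/(n+1)$ coming from $\sum_{k=0}^{n}P(k)\le 1$ is too weak, because it discards the dependence on $P(0)$ that must be played off against the factor $P(0)^{i}$; one has to use $\sum_{k\ge 1}P(k)=1-P(0)$ instead. Once that is in place, the remaining ingredients — maximizing $x^{i}(1-x)$ and the monotonicity of $x\ln(1+1/x)$ — are entirely standard, so I expect no further obstacle.
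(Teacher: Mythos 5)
Your proof is correct and follows essentially the same route as the paper: bound $P(n)\le\bigl(1-P(0)\bigr)/n$ using the decreasing property, maximize $x^{i}(1-x)$ at $x=i/(i+1)$, and then reduce to the fact $(1+1/i)^{i}\ge 2$ for $i\ge 1$. The only (minor) difference is that you establish this last fact by the monotonicity of $x\ln(1+1/x)$ via $\ln(1+t)>t/(1+t)$, whereas the paper shows $(1+1/i)^{i}$ is increasing by the AM--GM inequality; both are standard and equally valid.
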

\begin{proof}
Since $P$ is DPD, we obtain
\[
P(0)^{i}P(n)\leq\frac{P(0)^{i}\Big(1-P(0)\Big)}{n}.
\]
Let $g(x)=x^{i}(1-x)$, for $0<x<1$.
We know that $g(x)$ is strictly increasing when $x\in(0,\frac{i}{i+1})$ and $g(x)$ is strictly decreasing when $x\in(\frac{i}{i+1},1)$ via its derivative.
Thus,
\[
  P(0)^{i}P(n)\leq \frac{1}{n}\cdot g\left(\frac{i}{i+1}\right)=\frac{1}{n}\cdot\frac{1}{i+1}\cdot\left(\frac{i}{i+1}\right)^{i}.
\]
We prove that the sequence $\{a_{i}=(\frac{i}{i+1})^{i}\}_{i=1}^{\infty}$ is strictly monotonically decreasing below.
Let
\[
  b_{i}=\frac{1}{a_{i}}=\left(\frac{i+1}{i}\right)^{i}=\left(1+\frac{1}{i}\right)^{i},
\]
then $\{a_{i}\}_{i=1}^{\infty}$ strictly monotonically decreasing is equivalent to 
$\{b_{i}\}_{i=1}^{\infty}$ strictly monotonically increasing.
Due to the inequality of arithmetic and geometric means, we obtain
\begin{equation*}
\begin{aligned}
     b_{i}& = 1\cdot\underbrace {\l {\left(1+\frac{1}{i}\right)\cdots \left(1+\frac{1}{i}\right)}}_{\textstyle i}   \\
          & < \left[\frac{1+i(1+\frac{1}{i})}{i+1}\right]^{i+1}   \\
          & = \left(1+\frac{1}{i+1}\right)^{i+1}   \\
          & =b_{i+1}.
\end{aligned}
\end{equation*}
Thus,
\[
  P(0)^{i}P(n)\leq \frac{1}{n}\cdot\frac{1}{i+1}\cdot\left(\frac{i}{i+1}\right)^{i}\leq \frac{1}{n}\cdot\frac{1}{i+1}\cdot\frac{1}{2},
\]
and hence,
\[
   -\log_{2}\Big(P(0)^{i}P(n)\Big)\geq  1+\log_{2}n+\log_{2}(i+1).
\]

\end{proof}
\begin{lemma}\label{lemma2}
Given two positive numbers $a$ and $b$, then
\begin{equation} \label{eq4}
2a+b\log_{2}n+b\log_{2}(i+1)\leq -(2a+b)\log_{2}\Big(P(0)^{i}P(n)\Big),
\end{equation}
for all DPD $P$ and every $i\in \mathbb{N}$ and $n\in \mathbb{N}^{+}$.
\end{lemma}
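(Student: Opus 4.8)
The plan is to route through a single auxiliary logarithmic estimate and then split the factor $2a+b$. First I would establish the bound
\[
  -\log_{2}\big(P(0)^{i}P(n)\big)\ \geq\ \max\{\,1,\ \log_{2}n+\log_{2}(i+1)\,\}
\]
for every $i\in\mathbb{N}$ and $n\in\mathbb{N}^{+}$ (if $P(n)=0$ the right side of \eqref{eq4} is $+\infty$ and there is nothing to prove, so assume $P(n)>0$). For $i\geq 1$ this is immediate from Lemma~\ref{lemma3}, since $1+\log_{2}n+\log_{2}(i+1)$ is at least both $1$ and $\log_{2}n+\log_{2}(i+1)$. For $i=0$ the bound reads $-\log_{2}P(n)\geq\max\{1,\log_{2}n\}$, and I would prove its two halves from the DPD property of $P$ on $\mathbb{N}$: since $P(0)\geq P(1)$ and $P(0)+P(1)\leq 1$ we get $P(n)\leq P(1)\leq\tfrac12$, hence $-\log_{2}P(n)\geq 1$; and since $nP(n)\leq P(1)+\cdots+P(n)\leq 1-P(0)<1$ we get $-\log_{2}P(n)\geq\log_{2}n$.

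Given this estimate, the lemma is one line: because $a>0$ and $b>0$, multiply the inequality $-\log_{2}(P(0)^{i}P(n))\geq 1$ by $2a$ and the inequality $-\log_{2}(P(0)^{i}P(n))\geq\log_{2}n+\log_{2}(i+1)$ by $b$, then add, obtaining
\[
  (2a+b)\big(-\log_{2}(P(0)^{i}P(n))\big)\ \geq\ 2a+b\log_{2}n+b\log_{2}(i+1),
\]
which is exactly \eqref{eq4}. Equivalently, one can argue the two cases of \eqref{eq4} directly without naming the auxiliary estimate: for $i\geq 1$, multiply Lemma~\ref{lemma3} by $2a+b$ and verify that the gap between the two sides is $b+2a\log_{2}n+2a\log_{2}(i+1)\geq 0$; for $i=0$, use the two DPD bounds above.

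I do not anticipate any genuine obstacle here. The two points that require care are that Lemma~\ref{lemma3} is stated only for $i\geq 1$, so the case $i=0$ must be handled separately with the weaker bounds $-\log_{2}P(n)\geq 1$ and $-\log_{2}P(n)\geq\log_{2}n$, and that one must keep track of the signs of $a$, $b$, and $2a+b$ (all positive) so the inequalities are preserved under multiplication. Beyond that the proof is just a short case split followed by distributing $2a+b$ over the two lower bounds.
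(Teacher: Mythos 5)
Your proof is correct and takes essentially the same route as the paper: the case $i\geq 1$ is handled by Lemma~\ref{lemma3}, the case $i=0$ by the DPD bound on $P(n)$, and the conclusion follows by splitting the factor $2a+b$ so that $2a$ is paired with the constant lower bound and $b$ with the logarithmic one. The only cosmetic difference is that the paper derives the single estimate $-\log_{2}P(n)\geq\log_{2}(n+1)$ (which dominates both $1$ and $\log_{2}n$), whereas you prove the two bounds $-\log_{2}P(n)\geq 1$ and $-\log_{2}P(n)\geq\log_{2}n$ separately before combining them.
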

\begin{proof}
We first consider $i=0$. In this case, inequality~\eqref{eq4} can be rewritten as
\[
 2a+b\log_{2}n\leq -(2a+b)\log_{2}P(n).
\]
As $P(0)\geq P(1)\geq \cdots \geq P(n)\geq \cdots$, then
\[
  1=\sum_{m=0}^{\infty}P(m)\geq\sum_{m=0}^{n}P(m)\geq (n+1)P(n),
\]
and hence, $-\log_{2}P(n)\geq \log_{2}(n+1)$, for $n\in \mathbb{N}^{+}$.
Thus,
\begin{equation*}
\begin{aligned}
     -(2a+b)\log_{2}P(n)& \geq(2a+b)\log_{2}(n+1)   \\
                        & =2a\log_{2}(n+1)+ b\log_{2}(n+1)   \\
                        &> 2a+ b\log_{2}n.
\end{aligned}
\end{equation*}
Then, we consider $i\geq1$.
Due to Lemma~\ref{lemma3}, we have
\begin{equation*}
\begin{aligned}
 -(2a+b)\log_{2}\Big(P(0)^{i}P(n)\Big)  &  \geq  \ (2a+b)\Big(1+\log_{2}n+\log_{2}(i+1)\Big)   \\
                                &  >  \ 2a+b\log_{2}n+b\log_{2}(i+1).
\end{aligned}
\end{equation*}
\end{proof}

Now, we give the main theorem in this section.
\begin{theorem}\label{thm2}
Let $S=(P,\mathcal{A})$ denote a discrete memoryless source with entropy $0<H(P)<\infty$ and a countable alphabet $\mathcal{A}$.
Given the VV code $\mathcal{C}=(\mathcal{D}_{RLE},\varphi_{\psi})$ satisfying
\begin{equation}\label{eq5}
L_{\psi}(n)\leq a+b\log_{2}n, \text{ for } n\in \mathbb{N}^{+},
\end{equation}
where $a$ and $b$ are two positive constants, then we have
\[
 \frac{R_{\mathcal{C}}}{H(P)}\leq 2a+b,
\]
for all DPD $P$.
\end{theorem}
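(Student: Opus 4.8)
The plan is to reduce the claimed bound on $R_{\mathcal{C}}/H(P)$ to a term-by-term inequality over the dictionary $\mathcal{D}_{RLE}$ and then invoke the two auxiliary lemmas. First I would apply Lemma~\ref{lemma1} to rewrite $H(P)=H(\mathcal{D})/\overline{l(\mathcal{D})}$, where $\overline{l(\mathcal{D})}=\sum_{\alpha\in\mathcal{D}_{RLE}}P(\alpha)|\alpha|$ is exactly the denominator appearing in the coding rate $R_{\mathcal{C}}$, so that
$$\frac{R_{\mathcal{C}}}{H(P)}=\frac{\sum_{\alpha\in\mathcal{D}_{RLE}}P(\alpha)|\varphi_{\psi}(\alpha)|}{H(\mathcal{D})}.$$
Since $0<H(P)<\infty$ forces $0<P(0)<1$ (if $P(0)=1$ then $H(P)=0$, and if $P(0)=0$ then $P$ cannot be a DPD summing to $1$), a geometric-series computation gives $\overline{l(\mathcal{D})}=1/(1-P(0))<\infty$; hence $H(\mathcal{D})=H(P)\overline{l(\mathcal{D})}$ is finite and strictly positive and the displayed ratio is well defined.

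Next I would expand both sums using the parametrization $\alpha=\underbrace{00\cdots0}_{i}n$ with $i\in\mathbb{N}$ and $n\in\mathbb{N}^{+}$. Because the source is memoryless, $P(\alpha)=P(0)^{i}P(n)$; by construction $|\varphi_{\psi}(\alpha)|=L_{\psi}(i+1)+L_{\psi}(n)$; and $H(\mathcal{D})=-\sum_{i,n}P(0)^{i}P(n)\log_{2}\bigl(P(0)^{i}P(n)\bigr)$. It therefore suffices to establish the pointwise inequality
$$L_{\psi}(i+1)+L_{\psi}(n)\leq -(2a+b)\log_{2}\bigl(P(0)^{i}P(n)\bigr)$$
for every $i\in\mathbb{N}$ and $n\in\mathbb{N}^{+}$, since multiplying by the nonnegative weights $P(0)^{i}P(n)$ and summing then yields $\sum_{\alpha}P(\alpha)|\varphi_{\psi}(\alpha)|\leq(2a+b)H(\mathcal{D})$, which is the assertion.

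For the pointwise step I would apply hypothesis~\eqref{eq5} to each term, obtaining $L_{\psi}(i+1)\leq a+b\log_{2}(i+1)$ (valid since $i+1\geq 1$, and reducing to $L_{\psi}(1)\leq a$ when $i=0$) and $L_{\psi}(n)\leq a+b\log_{2}n$, so the left-hand side is at most $2a+b\log_{2}(i+1)+b\log_{2}n$. This is precisely the left-hand side of~\eqref{eq4} in Lemma~\ref{lemma2} with the two positive constants chosen to be $a$ and $b$, and its right-hand side is $-(2a+b)\log_{2}\bigl(P(0)^{i}P(n)\bigr)$; chaining the two inequalities closes the argument. I do not anticipate a real obstacle, since the substantive estimates have already been isolated in Lemmas~\ref{lemma3} and~\ref{lemma2}; the only points deserving care are the convergence bookkeeping for the two double series (guaranteed by $H(P)<\infty$ together with $P(0)<1$) and the boundary case $i=0$ noted above.
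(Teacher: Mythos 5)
Your proposal is correct and follows essentially the same route as the paper: apply Lemma~\ref{lemma1} to replace $H(P)\overline{l(\mathcal{D}_{RLE})}$ by $H(\mathcal{D}_{RLE})$, use hypothesis~\eqref{eq5} on each of the two codeword pieces, and invoke Lemma~\ref{lemma2} to get the pointwise bound $L_{\psi}(i+1)+L_{\psi}(n)\leq -(2a+b)\log_{2}\bigl(P(0)^{i}P(n)\bigr)$, then sum against the weights $P(0)^{i}P(n)$. Your extra remarks on well-definedness ($0<P(0)<1$, finiteness of $\overline{l(\mathcal{D}_{RLE})}$) are harmless additions the paper leaves implicit.
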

\begin{proof}
From Lemma~\ref{lemma2} and inequality~\eqref{eq5}, we obtain
\begin{equation}\label{eq9}
\begin{aligned}
|\varphi_{\psi}(\underbrace {\l {00\cdots 0}}_i n)|&= L_{\psi}(n)+L_{\psi}(i+1)   \\
                                                   &\leq 2a+b\log_{2}n+b\log_{2}(i+1)    \\
                                                   &\leq -(2a+b)\log_{2}\Big(P(0)^{i}P(n)\Big),
\end{aligned}
\end{equation}
for every $i\in \mathbb{N}$ and $n\in \mathbb{N}^{+}$.
From Lemma~\ref{lemma1}, we have
\begin{equation*}
\begin{aligned}
     \frac{R_{\mathcal{C}}}{H(P)}&=\frac{\sum_{\alpha\in\mathcal{D}_{RLE}}P(\alpha)|\varphi_{\psi}(\alpha)|}{H(\mathcal{D}_{RLE})}   \\
                                 &=\frac{\sum_{i,n}P(0)^{i}P(n)\Big(L_{\psi}(n)+L_{\psi}(i+1)\Big)}{-\sum_{i,n}P(0)^{i}P(n)\log_{2}\Big(P(0)^{i}P(n)\Big)}  \\
                                 &\overset{(a)}{\leq}\frac{\sum_{i,n}P(0)^{i}P(n)\Big[-(2a+b)\log_{2}\Big(P(0)^{i}P(n)\Big)\Big]}{-\sum_{i,n}P(0)^{i}P(n)\log_{2}\Big(P(0)^{i}P(n)\Big)}  \\
                                 &= 2a+b, \\
\end{aligned}
\end{equation*}
where $(a)$ is due to inequality~\eqref{eq9}.
\end{proof}
\begin{remark}
A variable-length code $\psi$ satisfying inequality~\eqref{eq5} is a sufficient condition for $\psi$ to be UCI~\cite{L81}.
To the best of our knowledge, all UCI codes currently proposed meet inequality~\eqref{eq5}.
Therefore, when we construct a GUCI $\mathcal{C}=(\mathcal{D}_{RLE},\varphi_{\psi})$, we can choose any known UCI code.
\end{remark}

\section{The tighter upper bound of $K_{\mathcal{C}}^{*}$ for optimal GUCI}\label{factor}
Based on UCIs, the prior section provides a family of GUCIs. 
This section explores the expansion factors of some specific GUCIs and obtains the tighter upper bound of $K_{\mathcal{C}}^{*}$ for optimal GUCI.
For any UCI $\mathcal{C}$, its expansion factor $K_{\mathcal{C}}$ is greater than or equal to $2$~\cite{YL21}. The best known UCI to date is the $\iota$ code~\cite{YL22} with $K_{\mathcal{\iota}}=2.5$.
Therefore, the optimal UCI is in the range $2\leq K_{\mathcal{C}}^{*}\leq 2.5$.
Theorem~\ref{thm1} shows that $K_{\mathcal{C}}^{*}$ of the optimal GUCI is greater than or equal to 1.
This section investigates the tighter upper bounds of $K_{\mathcal{C}}^{*}$ for optimal GUCI.

When constructing a VV code $\mathcal{C}=(\mathcal{D}_{RLE},\varphi_{\psi})$, we select Elias $\gamma$ code as the UCI $\psi$.
From Theorem~\ref{thm2} and $L_{\gamma}(n)=1+2\lfloor \log_{2}n\rfloor \leq 1+2\log_{2}n$, we obtain $K_{\mathcal{C}}=4$. It is showed~\cite{ITW} that $\mathcal{C}=(\mathcal{D}_{RLE},\varphi_{\gamma})$ can achieve $K_{\mathcal{C}}=\frac{6}{\log_{2}5}\approx2.584$.
However, this result is not tight, and we will show that the VV code $\mathcal{C}=(\mathcal{D}_{RLE},\varphi_{\gamma})$ can achieve $K_{\mathcal{C}}=2$. First, we give two auxiliary lemmas.

\begin{lemma}\label{lemma4}
For all DPD $P$ defined on $\mathbb{N}$ and all $m\in\mathbb{N}^{+}$, we obtain
\begin{enumerate}
\item $\sum_{j=1}^{m}P(j)\leq\frac{m}{m+1}$;
\item $\prod_{j=1}^{m}P(j)\leq\left(\frac{1}{m+1}\right)^{m}$;
\item Let $A_{m}\triangleq 2^{m}\times m! \times\left(\frac{1}{m+1}\right)^{m}$, then $A_{m}\leq1$;
\item Let $B_{m}\triangleq \sum_{j=1}^{m}\Big(1+\log_{2}j+\log_{2}P(j)\Big)$, then $B_{m}\leq0$.
\end{enumerate}
\end{lemma}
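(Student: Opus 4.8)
The plan is to prove the four statements in the order given, since each one feeds the next, and the only ingredients needed are the two defining properties of a DPD on $\mathbb{N}=\{0,1,2,\dots\}$ (monotonicity $P(0)\ge P(1)\ge\cdots$ and normalization $\sum_{n=0}^{\infty}P(n)=1$) together with the arithmetic--geometric mean inequality.

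For item~1, I would put $S\triangleq\sum_{j=1}^{m}P(j)$ and combine two elementary bounds: monotonicity gives $P(j)\le P(0)$ for $1\le j\le m$, hence $S\le mP(0)$; normalization gives $P(0)+S\le\sum_{n=0}^{\infty}P(n)=1$, hence $P(0)\le 1-S$. Chaining these yields $S\le m(1-S)$, that is $(m+1)S\le m$, which is the claim. For item~2, I would apply AM--GM to $P(1),\dots,P(m)$ to get $\prod_{j=1}^{m}P(j)\le\left(\frac{1}{m}\sum_{j=1}^{m}P(j)\right)^{m}$, and then use item~1 to bound the arithmetic mean by $\frac{1}{m}\cdot\frac{m}{m+1}=\frac{1}{m+1}$.

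For item~3, I would again invoke AM--GM, now on the integers $1,2,\dots,m$: since $\sum_{j=1}^{m}j=\frac{m(m+1)}{2}$, we obtain $m!=\prod_{j=1}^{m}j\le\left(\frac{m+1}{2}\right)^{m}$, which rearranges to $2^{m}\,m!\le(m+1)^{m}$, i.e. $A_{m}\le1$. For item~4, I would rewrite $B_{m}=m+\log_{2}(m!)+\log_{2}\prod_{j=1}^{m}P(j)$, replace $\prod_{j=1}^{m}P(j)$ by the upper bound $(m+1)^{-m}$ furnished by item~2, and conclude $B_{m}\le m+\log_{2}(m!)-m\log_{2}(m+1)=\log_{2}(2^{m}m!)-\log_{2}((m+1)^{m})=\log_{2}A_{m}\le0$ by item~3.

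The argument is essentially routine; the only step with any subtlety is the self-referential estimate in item~1, where one must notice that $S\le mP(0)$ and $P(0)\le 1-S$ can be combined so as to eliminate $P(0)$, and where it is essential that the distribution is indexed from $0$, so that the mass $P(0)$ is genuinely set aside by the sums $\sum_{j=1}^{m}$. Everything after that is bookkeeping with AM--GM and logarithms, and items~2--4 could even be folded into a single chain of inequalities once item~1 is in hand.
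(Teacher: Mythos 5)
Your proof is correct, and for items~1, 2 and~4 it follows essentially the same chain as the paper (item~1 bounds the partial sum via $P(0)$, item~2 is AM--GM combined with item~1, item~4 rewrites $B_{m}=\log_{2}\bigl(2^{m}\,m!\,\prod_{j=1}^{m}P(j)\bigr)$ and invokes items~2 and~3); the only presentational difference there is that you prove item~1 directly by eliminating $P(0)$ from $S\leq mP(0)$ and $P(0)\leq 1-S$, whereas the paper phrases the same two inequalities as a contradiction. The genuine divergence is item~3: the paper proves $A_{m}\leq1$ by induction on $m$, using the fact (established inside the proof of its Lemma~3) that the sequence $\bigl(\tfrac{i}{i+1}\bigr)^{i}$ is strictly decreasing, so that $A_{n+1}\leq\tfrac{8}{9}A_{n}$; you instead apply AM--GM to the integers $1,\dots,m$ to get $m!\leq\bigl(\tfrac{m+1}{2}\bigr)^{m}$, i.e.\ $2^{m}\,m!\leq(m+1)^{m}$, in one line. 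Your route is self-contained and shorter, avoiding any appeal to the auxiliary monotonicity fact; the paper's route buys nothing extra here beyond reusing machinery it already needed for Lemma~3. Both arguments are valid, so there is no gap to repair.
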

\begin{proof}
\begin{enumerate}
\item We prove that $\sum_{j=1}^{m}P(j)\leq\frac{m}{m+1}$ by contradiction.
Suppose there exists a DPD $P_{0}$ defined on $\mathbb{N}$ such that $\sum_{j=1}^{m}P_{0}(j)>\frac{m}{m+1}$.
Thus,
\[
 P_{0}(0)\leq1-\sum_{j=1}^{m}P_{0}(j)<\frac{1}{m+1},
\]
and hence,
\[
 \frac{m}{m+1} >mP_{0}(0) \geq\sum_{j=1}^{m}P_{0}(j) >\frac{m}{m+1},
\]
that is a contradiction. Thus, the assumption is not true.
\item
Dut to the inequality of arithmetic and geometric means, we obtain
\[
\prod_{j=1}^{m}P(j)\leq \left(\frac{\sum_{j=1}^{m}P(j)}{m}\right)^{m}\leq\left(\frac{1}{m+1}\right)^{m}.
\]
\item We prove that $A_{m}\leq1$ by mathematical induction.
When $m=1$, then $A_{1}=2\times1\times\frac{1}{2}\leq1$.
Suppose $A_{m}\leq1$ holds when $m=n$.
When $m=n+1$, we have
\begin{equation*}
\begin{aligned}
     A_{n+1}&=A_{n}\times 2(n+1)\left(\frac{1}{n+2}\right)^{n+1}\div\left(\frac{1}{n+1}\right)^{n}   \\
            &=2A_{n}\times \left(\frac{n+1}{n+2}\right)^{n+1} \\
             &\overset{(a)}{\leq}2A_{n}\times \left(\frac{2}{3}\right)^{2}  \\
             &=\frac{8}{9}A_{n}   \\
             &<1, \\
\end{aligned}
\end{equation*}
where $(a)$ is from that fact that the sequence $\{a_{i}=(\frac{i}{i+1})^{i}\}_{i=1}^{\infty}$ is strictly monotonically decreasing.
\item
From above results, we obtain
\begin{equation*}
\begin{aligned}
     B_{m}&=\sum_{j=1}^{m}\log_{2}\Big(2\times j\times P(j)\Big)   \\
          &=\log_{2}\left(2^{m}\times m!\times \prod_{j=1}^{m}P(j) \right) \\
          &\leq\log_{2}\left(2^{m}\times m!\times \left(\frac{1}{m+1}\right)^{m} \right)  \\
          &=\log_{2}A_{m}   \\
          &\leq0.\\
\end{aligned}
\end{equation*}
\end{enumerate}
\end{proof}
\begin{lemma}\label{lemma5}
For all DPD $P$ defined on $\mathbb{N}$ and all $m\in\mathbb{N}^{+}$, we define
\[
 S_{m}\triangleq \sum_{j=1}^{m}P(j)\Big(1+\log_{2}j+\log_{2}P(j)\Big).
\]
Then, $S_{m}\leq0$, for all $m\in\mathbb{N}^{+}$. Then we obtain
\[
\sum_{j=1}^{\infty}P(j)\Big(1+\log_{2}j+\log_{2}P(j)\Big)\leq0.
\]
\end{lemma}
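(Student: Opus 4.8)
The plan is to upgrade the \emph{unweighted} inequality $B_m\le 0$ from Lemma~\ref{lemma4}(4) to the \emph{$P$-weighted} inequality $S_m\le 0$ by summation by parts, using only that $P$ is a DPD. The point is that $S_m$ is a ``$P(j)$-smoothed'' version of $B_m$, and since the weights $P(j)$ are non-increasing, smoothing cannot turn a sequence of non-positive partial sums into a positive one.

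Concretely, set $c_j\triangleq 1+\log_2 j+\log_2 P(j)$, so that $S_m=\sum_{j=1}^m P(j)c_j$ while $B_m=\sum_{j=1}^m c_j$; I adopt the convention $B_0\triangleq 0$ (and the usual $0\log_2 0=0$). Abel summation with summand $c_j$ and weights $P(j)$ gives
\begin{equation*}
S_m=\sum_{j=1}^m c_jP(j)=B_mP(m)+\sum_{j=1}^{m-1}B_j\big(P(j)-P(j+1)\big),
\end{equation*}
the boundary term $B_0P(1)$ having vanished because $B_0=0$. Now I read off the signs: $B_j\le 0$ for every $1\le j\le m$ by Lemma~\ref{lemma4}(4), $P(m)\ge 0$, and $P(j)-P(j+1)\ge 0$ since $P$ is decreasing; hence every term on the right is non-positive and $S_m\le 0$ for all $m\in\mathbb{N}^{+}$, which is the first assertion.

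For the second assertion I would pass to the limit. Because $jP(j)\le\sum_{i=1}^j P(i)\le 1$ we have $c_j\le 1$, so the positive part of the series is bounded: $\sum_{j:\,c_j>0}P(j)c_j\le\sum_{j\ge 1}P(j)=1$. Consequently $\sum_{j=1}^{\infty}P(j)c_j$ converges to a finite value (or diverges to $-\infty$), and in either case it equals $\lim_{m\to\infty}S_m\le 0$.

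The only non-routine step is spotting that summation by parts is the right device and arranging it so that the boundary term disappears ($B_0=0$) while the surviving coefficients $P(j)-P(j+1)$ pick up exactly the sign supplied by the DPD hypothesis; after that, the argument is pure bookkeeping on top of Lemma~\ref{lemma4}.
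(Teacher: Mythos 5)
Your proof is correct and takes essentially the same route as the paper: the paper's chain of inequalities $S_m \le P(2)B_2+\sum_{j\ge 3}P(j)(B_j-B_{j-1}) \le \cdots \le P(m)B_m \le 0$ is exactly your Abel summation carried out one step at a time, discarding the non-positive contributions $\big(P(j)-P(j+1)\big)B_j$ supplied by Lemma~\ref{lemma4}(4) and the DPD hypothesis. Your remark bounding the positive part of the series before passing to the limit is a small extra precaution the paper omits, but otherwise the arguments coincide.
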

\begin{proof}
When $m=1$, we have $S_{1}=P(1)B_{1}\leq0$. When $m\geq2$, we obtain
\begin{equation*}
\begin{aligned}
     S_{m}=&\ P(1)B_{1}+\sum_{j=2}^{m}P(j)(B_{j}-B_{j-1}) \\
          =&\ \Big(P(1)-P(2)\Big)B_{1}+P(2)B_{2}+\sum_{j=3}^{m}P(j)(B_{j}-B_{j-1}) \\
          \leq &\  P(2)B_{2}+\sum_{j=3}^{m}P(j)(B_{j}-B_{j-1}) \\
           \vdots &  \\
          \leq & \ P(m-1)B_{m-1}+P(m)(B_{m}-B_{m-1}) \\
          \leq & \ P(m)B_{m}    \\
          \leq & \ 0.
\end{aligned}
\end{equation*}
Thus, we have
\[
\sum_{j=1}^{\infty}P(j)\Big(1+\log_{2}j+\log_{2}P(j)\Big)=\lim\limits_{m\to +\infty}S_{m}\leq \lim\limits_{m\to +\infty}0 = 0.
\]
\end{proof}
Now, we give the main result of this section.
\begin{theorem}\label{thm5}
Let $S=(P,\mathcal{A})$ denote a discrete memoryless source with entropy $0<H(P)<\infty$ and a countable alphabet $\mathcal{A}$.
Given a VV code $\mathcal{C}=(\mathcal{D}_{RLE},\varphi_{\psi})$ satisfying
\begin{equation} \label{eq25}
L_{\psi}(n)\leq b+2b\log_{2}n, \text{ for } n\in \mathbb{N}^{+},
\end{equation}
where $b$ is a positive constant, then
\[
 \frac{R_{\mathcal{C}}}{H(P)}\leq 2b,
\]
for all DPD $P$.
\end{theorem}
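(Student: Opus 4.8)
The plan is to follow the scheme of the proof of Theorem~\ref{thm2}, but to treat the leading run of zeros and the trailing symbol $n$ more carefully than the blunt pointwise estimate of Lemma~\ref{lemma2} allows. The extra care will be concentrated on the slice of the sum with no leading zeros, and this is exactly what buys the improvement from $4b$ (which is what Theorem~\ref{thm2} with intercept $a=b$ and slope $2b$ would give) down to $2b$.

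First, from $\varphi_\psi(\underbrace{00\cdots0}_{i}n)=\psi(i+1)\psi(n)$ and hypothesis~\eqref{eq25},
\[
\big|\varphi_\psi(\underbrace{00\cdots0}_{i}n)\big|=L_\psi(i+1)+L_\psi(n)\le 2b+2b\log_2(i+1)+2b\log_2 n,
\]
for all $i\in\mathbb{N}$ and $n\in\mathbb{N}^{+}$. Next, since the source is memoryless, $P(\underbrace{00\cdots0}_{i}n)=P(0)^{i}P(n)$, so Lemma~\ref{lemma1} yields, just as in Theorem~\ref{thm2},
\[
\frac{R_{\mathcal{C}}}{H(P)}=\frac{\sum_{i,n}P(0)^{i}P(n)\big(L_\psi(i+1)+L_\psi(n)\big)}{-\sum_{i,n}P(0)^{i}P(n)\log_2\big(P(0)^{i}P(n)\big)},
\]
where convergence of both series follows from $P(n)\le\frac{1}{n+1}$ and $H(P)<\infty$, and the denominator is strictly positive because $H(P)>0$. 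Combining this identity with the length bound and clearing the positive denominator, the theorem reduces to the single inequality
\[
\sum_{i\ge 0}\sum_{n\ge 1}P(0)^{i}P(n)\Big(1+\log_2(i+1)+\log_2 n+\log_2\big(P(0)^{i}P(n)\big)\Big)\le 0.
\]

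Finally, I would split this double sum according to whether $i=0$ or $i\ge 1$. On the slice $i=0$ the summand over $n$ is $P(n)\big(1+\log_2 n+\log_2 P(n)\big)$, whose total is $\le 0$ by Lemma~\ref{lemma5}. On each term with $i\ge 1$, Lemma~\ref{lemma3} gives $-\log_2\big(P(0)^{i}P(n)\big)\ge 1+\log_2 n+\log_2(i+1)$, so that term is nonpositive individually. Summing the two contributions yields the displayed inequality, and hence $R_{\mathcal{C}}/H(P)\le 2b$.

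The crux — and the reason a factor of two is saved over Theorem~\ref{thm2} — is the asymmetry between the two slices: the pointwise bound of Lemma~\ref{lemma3} genuinely holds for $i\ge 1$ but fails for $i=0$, where one only has $-\log_2 P(n)\ge\log_2(n+1)$, which need not exceed $1+\log_2 n$. So the $i=0$ slice cannot be controlled term by term and must instead be handled \emph{on average}, precisely through the sharper summed estimate $\sum_{n}P(n)\big(1+\log_2 n+\log_2 P(n)\big)\le 0$ of Lemma~\ref{lemma5}, which in turn rests on the Abel-summation argument applied to $B_m\le 0$ from Lemma~\ref{lemma4}. The one thing to check with care is that the additive constant $1$ coming from the two length intercepts is absorbed \emph{exactly} — not merely up to a constant factor — in both the $i=0$ and $i\ge 1$ cases, so that the coefficient on the right stays $2b$ rather than degrading.
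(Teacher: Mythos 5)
Your proposal is correct and follows essentially the same route as the paper's proof: factor out $2b$ from the length bound, rewrite the ratio via Lemma~\ref{lemma1}, and split the resulting sum into the $i\ge 1$ slice (handled pointwise by Lemma~\ref{lemma3}) and the $i=0$ slice (handled in aggregate by Lemma~\ref{lemma5}). Your closing remark about why the $i=0$ slice must be treated on average rather than termwise is exactly the point the paper's argument exploits, so nothing further is needed.
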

\begin{proof}
From Lemma~\ref{lemma1} and inequality~\eqref{eq25}, we obtain
\begin{equation*}
\begin{aligned}
     \frac{R_{\mathcal{C}}}{H(P)}&=\frac{\sum_{\alpha\in\mathcal{D}_{RLE}}P(\alpha)|\varphi_{\psi}(\alpha)|}{H(P)\overline{l(\mathcal{D}_{RLE})}}   \\
                                 &=\frac{\sum_{i=0}^{\infty}\sum_{n=1}^{\infty}P(0)^{i}P(n)\Big(L_{\psi}(i+1)+L_{\psi}(n)\Big)}{H(\mathcal{D}_{RLE})}   \\
                                 &\leq 2b\cdot \frac{\sum_{i=0}^{\infty}\sum_{n=1}^{\infty}P(0)^{i}P(n)\Big(1+\log_{2}n+\log_{2}(i+1)\Big)}{H(\mathcal{D}_{RLE})} . \\
\end{aligned}
\end{equation*}
Therefore, proving $\frac{R_{\mathcal{C}}}{H(P)}\leq 2b$ is equivalent to show that
\begin{equation} \label{eq17}
 \sum_{i=0}^{\infty}\sum_{n=1}^{\infty}P(0)^{i}P(n)\Big(1+\log_{2}n+\log_{2}(i+1)\Big)\leq H(\mathcal{D}_{RLE}).
\end{equation}
When $i\geq1$, from Lemma~\ref{lemma3}, we have
\[
  1+\log_{2}n+\log_{2}(i+1)\leq -\log_{2}\Big(P(0)^{i}P(n)\Big).
\]
Thus, we obtain
\begin{equation} \label{eq15}
 \sum_{i=1}^{\infty}\sum_{n=1}^{\infty}P(0)^{i}P(n)\Big(1+\log_{2}n+\log_{2}(i+1)\Big)\leq -\sum_{i=1}^{\infty}\sum_{n=1}^{\infty}P(0)^{i}P(n)\log_{2}\Big(P(0)^{i}P(n)\Big).
\end{equation}
When $i=0$, from Lemma~\ref{lemma5}, we have
\begin{equation} \label{eq16}
  \sum_{n=1}^{\infty}P(n)\left(1+\log_{2}n\right)\leq -\sum_{n=1}^{\infty}P(n)\log_{2}P(n).
\end{equation}
From inequality \eqref{eq15} and inequality \eqref{eq16}, inequality \eqref{eq17} holds.
\end{proof}
\begin{remark}
As $1\leq L_{\psi}(1)\leq b$, the minimum of $b$ in Theorem~\ref{thm5} is $1$.
From Theorem~\ref{thm5} and $L_{\gamma}(n)\leq 1+2\log_{2}n$, we know that $\mathcal{C}=(\mathcal{D}_{RLE},\varphi_{\gamma})$ can achieve $K_{\mathcal{C}}=2$.
Thus, Elias $\gamma$ code achieves the best case of Theorem~\ref{thm5}.
\end{remark}
Next, we discuss $K_C$ for GUCIs constructed using other classical UCIs.
First, the following lemma is proved.

\begin{lemma}\label{lemma6}
For all $n\in\mathbb{N}^{+}$,
\begin{enumerate}
\item the length function of $\delta$ code satisfies $L_{\delta}(n)\leq \frac{4}{3}+\frac{8}{3}\log_{2}n$;
\item the length function of $\eta$ code satisfies $L_{\eta}(n)\leq \frac{6}{1+2\log_{2}5}+\frac{12}{1+2\log_{2}5}\log_{2}n$;
\item the length function of $\theta$ code satisfies $L_{\theta}(n)\leq \frac{4}{3}+\frac{8}{3}\log_{2}n$;
\item the length function of $\iota$ code satisfies  $L_{\iota}(n)\leq \frac{4}{3}+\frac{8}{3}\log_{2}n$;
\item the length function of $\omega$ code satisfies $L_{\omega}(n)\leq \frac{11}{9}+\frac{22}{9}\log_{2}n$.
\end{enumerate}
\end{lemma}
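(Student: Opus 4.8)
The plan is to observe that the five asserted bounds all have the common form $L_\psi(n)\le b+2b\log_2 n=b(1+2\log_2 n)$, with $b=\tfrac{4}{3}$ for the $\delta$, $\theta$ and $\iota$ codes, $b=\tfrac{6}{1+2\log_2 5}$ for the $\eta$ code, and $b=\tfrac{11}{9}$ for the $\omega$ code. Thus Lemma~\ref{lemma6} is exactly the assertion that each of these classical UCIs satisfies the hypothesis~\eqref{eq25} of Theorem~\ref{thm5} with the indicated $b$ — so that the associated GUCI $\mathcal{C}=(\mathcal{D}_{RLE},\varphi_{\psi})$ attains $K_{\mathcal{C}}=2b$ — and the whole task reduces to proving these five explicit length estimates, which I would treat uniformly.

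First I would exploit that each length function is piecewise constant. Using the closed forms recalled in Section~\ref{pre}, $L_\delta$, $L_\theta$ and $L_\iota$ are constant on the singleton $\{1\}$ and on each dyadic block $[2^k,2^{k+1})$, $k\ge 1$, because everything there depends on $n$ only through $\lfloor\log_2 n\rfloor=k$; for $\omega$ I would use $L_\omega(n)\le 3+2\lfloor\log_2 n\rfloor$, which is likewise constant on these blocks; and $L_\eta$ is constant on $\{1\}$ and on the shifted blocks $[2^k+1,2^{k+1}+1)$, $k\ge 0$, since $L_\eta(n)$ depends on $n$ only through $\lfloor\log_2(n-1)\rfloor$. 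Since $1+2\log_2 n$ is strictly increasing, the inequality $L_\psi(n)\le b(1+2\log_2 n)$ holds on a whole block as soon as it holds at the block's left endpoint. Hence it suffices to check, for each code, the inequality at $n=1$ and at the endpoints $n=2^k$ ($k\ge 1$) — or $n=2^k+1$ ($k\ge 0$) for $\eta$ — i.e.\ a single one-parameter family of inequalities in $k$.

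These families are then elementary. For $\iota$, $L_\iota(2^k)=2+k+\lfloor\tfrac{1+k}{2}\rfloor$, and after $\lfloor\tfrac{1+k}{2}\rfloor\le\tfrac{1+k}{2}$ the target inequality $3L_\iota(2^k)\le 4(1+2k)$ collapses to $7\le 7k$; together with $L_\iota(1)=1\le\tfrac{4}{3}$ this finishes $\iota$. For $\delta$ and $\theta$, $L_\delta(2^k)=1+k+2\lfloor\log_2(k+1)\rfloor$ and $L_\theta(2^k)=3+k+\lfloor\log_2 k\rfloor+\lfloor\tfrac{1}{2}\lfloor\log_2 k\rfloor\rfloor$, and the targets reduce, via $\lfloor\log_2 x\rfloor\le\log_2 x$ and $\lfloor\log_2 k\rfloor+\lfloor\tfrac{1}{2}\lfloor\log_2 k\rfloor\rfloor\le\tfrac{3}{2}\log_2 k$, to $6\lfloor\log_2(k+1)\rfloor\le 1+5k$ and $\tfrac{9}{2}\log_2 k\le 5(k-1)$; each is an equality at $k=1$ and then holds for $k\ge 2$ by a short monotone argument (e.g.\ induction over the dyadic range of $k$), so $\sup_n L_\psi(n)/(1+2\log_2 n)$ is attained at $n=2$ and equals $\tfrac{4}{3}$. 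For $\omega$, the endpoint bound $3+2k\le\tfrac{11}{9}(1+2k)$ holds exactly when $k\ge 4$, so the case $n\ge 16$ is immediate; the cases $k=1,2,3$ (that is $n=2,4,8$) I would settle with the exact values $L_\omega(2)=3$, $L_\omega(4)=6$, $L_\omega(8)=7$ from the $\omega$-code formula, since here the crude bound $3+2\lfloor\log_2 n\rfloor$ is genuinely too weak (it equals $5$ at $n=2$), while $n=16$ gives the equality $11=\tfrac{11}{9}\cdot 9$. For $\eta$, $L_\eta(2^k+1)=3+k+\lfloor\tfrac{k}{2}\rfloor$, and I would show $\tfrac{3+k+\lfloor k/2\rfloor}{1+2\log_2(2^k+1)}\le\tfrac{6}{1+2\log_2 5}$ for all $k\ge 0$: for $k\ge 4$ bound the numerator by $3+\tfrac{3k}{2}$ and note $\tfrac{3+3k/2}{1+2k}$ is decreasing in $k$ with value $1$ at $k=4$, hence below $\tfrac{6}{1+2\log_2 5}$; the cases $k=0,1,2,3$ and $n=1$ are checked directly, with $k=2$ ($n=5$) giving the equality $\tfrac{6}{1+2\log_2 5}$.

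The one point that needs care is that these bounds are tight and cannot be obtained by simply dropping the floors and arguing with the continuous relaxation: for $\delta$ the relaxed inequality $2\log_2(1+x)\le\tfrac{1}{3}+\tfrac{5}{3}x$ (with $x=\log_2 n$) actually fails for $x$ near $0.73$, i.e.\ for a real argument strictly between $n=1$ and $n=2$, and for $\omega$ the inequality $3+2\log_2 n\le\tfrac{11}{9}(1+2\log_2 n)$ fails at $n=2$; it is precisely the integrality captured by the block reduction that rescues each estimate. So the substance of the argument is (i) setting up the block decomposition correctly — in particular the shifted blocks $[2^k+1,2^{k+1}+1)$ for $\eta$ and the use of exact small values of $L_\omega$ rather than its upper bound — and (ii) the handful of boundary verifications, several of which are equalities (at $n=2$ for $\delta$, $\theta$, $\iota$; at $n=5$ for $\eta$; at $n=16$ for $\omega$), so that no slack may be wasted; everything else is routine single-variable estimation.
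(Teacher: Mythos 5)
Your proposal is correct and is essentially the paper's own argument in a different packaging: the paper also works with the integer parameter $k=\lfloor\log_{2}n\rfloor$ (or $\lfloor\log_{2}(n-1)\rfloor$), proving per-code auxiliary inequalities such as $\lfloor\log_{2}(1+k)\rfloor\leq\frac{1}{6}+\frac{5}{6}k$ and $\frac{5}{3}+\frac{3}{2}\lfloor\log_{2}k\rfloor\leq\frac{5}{3}k$, and it likewise verifies the $\eta$ and $\omega$ cases for $n<16$ directly before using the crude bounds $3+\frac{3}{2}\lfloor\log_{2}n\rfloor$ and $3+2\lfloor\log_{2}n\rfloor$. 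Your uniform dyadic-block/left-endpoint reduction, the resulting one-parameter inequalities in $k$, and the observation that the purely continuous relaxation fails (so integrality is essential) all match the substance of the paper's proof, so the proposal stands as written.
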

\begin{proof}
\begin{enumerate}
\item Obviously, the inequality $\lfloor\log_{2}(1+x)\rfloor\leq\frac{1}{6}+\frac{5}{6}x$ holds, for all $x\in\mathbb{N}$.
Thus, we obtain
\begin{equation*}
\begin{aligned}
     L_{\delta}(n)&=1+\lfloor\log_{2}n\rfloor+2\lfloor\log_{2}(1+\lfloor\log_{2}n\rfloor)\rfloor  \\
                  &\leq1+\lfloor\log_{2}n\rfloor+2\left(\frac{1}{6}+\frac{5}{6}\lfloor\log_{2}n\rfloor\right)   \\
                  &\leq \frac{4}{3}+\frac{8}{3}\log_{2}n.
\end{aligned}
\end{equation*}
\item Let $f(n)\triangleq \frac{6}{1+2\log_{2}5}+\frac{12}{1+2\log_{2}5}\log_{2}n$.
We directly verify $L_{\eta}(n)\leq f(n)$, for $n<16$.
When $n\geq16$, we have
\begin{equation*}
\begin{aligned}
     L_{\eta}(n)  &=3+\lfloor\log_{2}(n-1)\rfloor+\lfloor\frac{\lfloor\log_{2}(n-1)\rfloor}{2}\rfloor  \\
                  &\leq 3+\frac{3}{2}\lfloor\log_{2}n\rfloor   \\
                  &\leq 1+2\lfloor\log_{2}n\rfloor   \\
                  &<f(n).
\end{aligned}
\end{equation*}
\item
Obviously, the inequality $\frac{5}{3}+\frac{3}{2}\lfloor\log_{2}x\rfloor\leq\frac{5}{3}x$ holds, for all $x\in\mathbb{N}^{+}$.
Thus, we obtain $ L_{\theta}(1)=1<\frac{4}{3}$ and
\begin{equation*}
\begin{aligned}
     L_{\theta}(n)&=3+\lfloor\log_{2}n\rfloor+\lfloor\log_{2}\lfloor\log_{2}n\rfloor\rfloor+\lfloor\frac{\lfloor\log_{2}\lfloor\log_{2}n\rfloor\rfloor}{2}\rfloor  \\
                  &\leq 3+\lfloor\log_{2}n\rfloor+\frac{3}{2}\lfloor\log_{2}\lfloor\log_{2}n\rfloor\rfloor \\
                  & = \frac{4}{3}+\lfloor\log_{2}n\rfloor+\left(\frac{5}{3}+\frac{3}{2}\lfloor\log_{2}\lfloor\log_{2}n\rfloor\rfloor\right)    \\
                  & \leq \frac{4}{3}+\frac{8}{3}\log_{2}n,
\end{aligned}
\end{equation*}
for $n\geq2$.
\item We obtain $ L_{\iota}(1)=1<\frac{4}{3}$ and
\begin{equation*}
\begin{aligned}
     L_{\theta}(n)&=2+\lfloor\log_{2}n\rfloor+\lfloor\frac{1+\lfloor\log_{2}n\rfloor}{2}\rfloor  \\
                  &\leq \frac{5}{2}+\frac{3}{2}\lfloor\log_{2}n\rfloor \\
                  & =\frac{4}{3}+\frac{8}{3}\lfloor\log_{2}n\rfloor+\frac{7}{6}\left( 1-\lfloor\log_{2}n\rfloor \right) \\
                  & \leq \frac{4}{3}+\frac{8}{3}\log_{2}n,
\end{aligned}
\end{equation*}
for $n\geq2$.
\item
We directly verify $L_{\omega}(n)\leq \frac{11}{9}+\frac{22}{9}\log_{2}n$, for $n<16$.
When $n\geq16$, we have
\begin{equation*}
\begin{aligned}
     L_{\omega}(n) &\leq3+2\lfloor\log_{2}n\rfloor  \\
                   &=\frac{11}{9}+\frac{22}{9}\lfloor\log_{2}n\rfloor+\frac{4}{9}\left( 4-\lfloor\log_{2}n\rfloor \right)  \\
                  &\leq \frac{11}{9}+\frac{22}{9}\log_{2}n.
\end{aligned}
\end{equation*}
\end{enumerate}
\end{proof}
\begin{table}[t]
\centering
\caption{The expansion factors that can be achieved for VV code $\mathcal{C}=(\mathcal{D}_{RLE},\varphi_{\psi})$}\label{tab1}
\begin{tabular}{|c|c|}
\hline
UCI $\psi$  &   expansion factor $K_{\mathcal{C}}$   \\
\hline
$\gamma$ code  &    $2$            \\
$\eta$ code    &    $\frac{12}{1+2\log_{2}5}\approx2.13$       \\
$\omega$ code  &    $\frac{22}{9}\approx2.44$         \\
$\delta$ code, $\theta$ code and $\iota$ code  &    $\frac{8}{3}\approx2.67$    \\
\hline
\end{tabular}
\end{table}
From Theorem~\ref{thm5} and Lemma~\ref{lemma6}, Table~\ref{tab1} lists the expansion factors of GUCIs when choosing various UCIs.
Note that based on previous proofs, we are aware that the range of the minimum expansion factor of the optimal GUCI is $1\leq K_{\mathcal{C}}^{*}\leq2$.

\section{Comparison of the average codeword lengths of UCI $\psi$ and GUCI $\mathcal{C}=(\mathcal{D}_{RLE},\varphi_{\psi})$}\label{length}

In this section, we compare the expected codeword length $E_{P}(L_{\mathcal{\psi}})$ of UCI $\psi$ 
and the coding rate $R_{\mathcal{C}}$ of GUCI $\mathcal{C}=(\mathcal{D}_{RLE},\varphi_{\psi})$. 
Intuitively, when Shannon entropy $H(P)$ is small or $P(0)$ is large, $E_{P}(L_{\mathcal{\psi}})>R_{\mathcal{C}}$;
when Shannon entropy $H(P)$ is large or $P(0)$ is small, $E_{P}(L_{\mathcal{\psi}})<R_{\mathcal{C}}$. 
The following two conclusions can be drawn from the research in this section. 
One is that when $P(0)$ is relatively large, $E_{P}(L_{\mathcal{\psi}})>R_{\mathcal{C}}$; that is, 
a sufficient condition for the average codeword length of GUCI to be shorter than UCI is obtained.
The second is that when Shannon entropy $H(P)$ is very large or $P(0)$ is not large, 
there are still cases where $E_{P}(L_{\mathcal{\psi}})>R_{\mathcal{C}}$. 
A detailed discussion is given below.

To begin with, we recall a definition. If a class of UCI $\psi$ satisfies
\begin{equation}\label{eq21}
 L_{\psi}(m)\leq L_{\psi}(m+1), \hbox{  for } m\in\mathbb{N},
\end{equation}
then $\psi$ is termed \emph{minimal}~\cite{Elias75}.
For all DPD $P$, $E_{P}(L_{\mathcal{C}})$ can be minimized when inequality~\eqref{eq21} is satisfied, hence the definition is natural.

The $E_{P}(L_{\mathcal{\psi}})$ and $R_{\mathcal{C}}$ are defined as
\[
 E_{P}(L_{\mathcal{\psi}})=\sum_{n=0}^{\infty}P(n)L_{\psi}(n+1),
\]
\begin{equation*}
 R_{\mathcal{C}}=\frac{\sum_{\alpha\in\mathcal{D}_{RLE}}P(\alpha)|\varphi_{\psi}(\alpha)|}{\overline{l(\mathcal{D}_{RLE})}}.
\end{equation*}
Furthermore, we obtain
\begin{equation*}
\begin{aligned}
  \overline{l(\mathcal{D}_{RLE})}&= \sum_{i=0}^\infty\sum_{n=1}^{\infty}P(0)^{i}P(n)(i+1)        \\
                                 &=\sum_{n=1}^{\infty}P(n)\left(\sum_{i=0}^\infty P(0)^{i}+\sum_{i=0}^\infty iP(0)^{i}\right)\\
                                 &=\Big(1-P(0)\Big)\left(\frac{1}{1-P(0)}+\frac{P(0)}{\Big(1-P(0)\Big)^{2}}\right)      \\
                                 &=\frac{1}{1-P(0)},
\end{aligned}
\end{equation*}
and
\begin{equation*}
\begin{aligned}
\sum_{\alpha\in\mathcal{D}_{RLE}}P(\alpha)|\varphi_{\psi}(\alpha)|   &=\sum_{i=0}^\infty\sum_{n=1}^{\infty}P(0)^{i}P(n)\Big(L_{\psi}(i+1)+L_{\psi}(n)\Big)        \\
  &=\sum_{n=1}^{\infty}P(n)\sum_{i=0}^\infty P(0)^{i}L_{\psi}(i+1)+\sum_{i=0}^\infty P(0)^{i}\sum_{n=1}^{\infty}P(n)L_{\psi}(n)    \\
  &=\Big(1-P(0)\Big)\sum_{i=0}^\infty P(0)^{i}L_{\psi}(i+1)+\frac{\sum_{n=1}^{\infty}P(n)L_{\psi}(n)}{1-P(0)}.
\end{aligned}
\end{equation*}
Thus, we have
\begin{equation*}
 R_{\mathcal{C}}=\Big(1-P(0)\Big)^{2}\sum_{i=0}^\infty P(0)^{i}L_{\psi}(i+1)+\sum_{n=1}^{\infty}P(n)L_{\psi}(n).
\end{equation*}

Let
\begin{equation}\label{eq23}
\begin{aligned}
 \Delta &\triangleq R_{\mathcal{C}}-E_{P}(L_{\mathcal{\psi}})        \\
        &=\Big(1-P(0)\Big)^{2}\sum_{i=0}^\infty P(0)^{i}L_{\psi}(i+1)-P(0)L_{\psi}(1)-\sum_{n=1}^{\infty}P(n)\Big(L_{\psi}(n+1)-L_{\psi}(n)\Big)   \\
        &=\Big(1-P(0)\Big)^{2}\sum_{i=0}^\infty P(0)^{i}L_{\psi}(i+1)-P(0)L_{\psi}(1)-\sum_{n=1}^{\infty}P(n)\Delta_{\psi}(n)  ,
\end{aligned}
\end{equation}
where $\Delta_{\psi}(n)\triangleq L_{\psi}(n+1)-L_{\psi}(n)$ is the jump value of $\psi$ at $n$. 
$\Delta$ is a function of probability distribution $P$ and length function $L_{\psi}(\cdot)$.

When analyzing $\Delta$ without imposing restrictions on $\psi$, it is impossible to get the size of the relationship between $\Delta$ and $0$.
Then, we restrict $\psi$ with reasonable conditions and get the conclusion that $\Delta<0$ when $P(0)$ is relatively large.
The main theorem is proposed.
\begin{theorem}\label{thm6}
When constructing a VV code $\mathcal{C}=(\mathcal{D}_{RLE},\varphi_{\psi})$, the UCI $\psi$ is minimal and its length function satisfies
\begin{equation*}
L_{\psi}(n)\leq a+b\lfloor\log_{2}n\rfloor, \text{ for } 2\leq n\in \mathbb{N}^{+},
\end{equation*}
where $a$ and $b$ are two positive constants.
If there exists $t\in(0,1)$, such that
\[
 L_{\psi}(1)\left(t+\frac{1}{t}-3\right)+a\left(1-t\right)+b\left(1-t\right)\left(1+t^{2}+\frac{t^{6}}{1-t^{8}}\right)\leq 0.
\]
Then, $\Delta<0$ when $P(0)\geq t$; that is, $R_{\mathcal{C}}<E_{P}(L_{\mathcal{\psi}})$ when $P(0)\geq t$.
\end{theorem}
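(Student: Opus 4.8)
The plan is to estimate $\Delta$ from above by $p\,h(p)$, where $p:=P(0)$ and $h$ is precisely the one–variable function occurring in the hypothesis, and then to read off the conclusion from $h(t)\le0$ together with monotonicity of $h$. The starting point is formula \eqref{eq23}:
\[
\Delta=(1-p)^{2}\sum_{i\ge0}p^{\,i}L_{\psi}(i+1)-pL_{\psi}(1)-\sum_{n\ge1}P(n)\Delta_{\psi}(n).
\]
Since $\psi$ is minimal, $\Delta_{\psi}(n)=L_{\psi}(n+1)-L_{\psi}(n)\ge0$ for every $n$, so the last sum is non-negative and can simply be discarded, leaving $\Delta\le(1-p)^{2}\sum_{i\ge0}p^{i}L_{\psi}(i+1)-pL_{\psi}(1)$.

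Next I would estimate the series. Separate the $i=0$ term, apply the hypothesis $L_{\psi}(i+1)\le a+b\lfloor\log_{2}(i+1)\rfloor$ for $i\ge1$, and use $\sum_{i\ge1}p^{i}=p/(1-p)$; this reduces the task to the sum $\sum_{i\ge1}p^{i}\lfloor\log_{2}(i+1)\rfloor$. Writing $\lfloor\log_{2}(i+1)\rfloor=\sum_{k\ge1}\mathbf{1}\{i+1\ge2^{k}\}$ and exchanging the order of summation gives the identity
\[
\sum_{i\ge1}p^{i}\lfloor\log_{2}(i+1)\rfloor=\frac{1}{1-p}\sum_{k\ge1}p^{2^{k}-1}=\frac{1}{1-p}\bigl(p+p^{3}+p^{7}+p^{15}+\cdots\bigr).
\]
Dominating the tail $p^{14}+p^{30}+p^{62}+\cdots$ term by term by the geometric series $p^{14}+p^{22}+p^{30}+\cdots=\frac{p^{14}}{1-p^{8}}$ yields $\sum_{k\ge1}p^{2^{k}-1}\le p\bigl(1+p^{2}+\frac{p^{6}}{1-p^{8}}\bigr)$, and this inequality is strict for every $p\in(0,1)$. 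Substituting back, using $(1-p)^{2}L_{\psi}(1)-pL_{\psi}(1)=pL_{\psi}(1)\bigl(p+\frac1p-3\bigr)$, and collecting terms, I obtain
\[
\Delta<p\,h(p),\qquad h(p):=L_{\psi}(1)\Bigl(p+\tfrac1p-3\Bigr)+a(1-p)+b(1-p)\Bigl(1+p^{2}+\tfrac{p^{6}}{1-p^{8}}\Bigr).
\]

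It remains to show $h(p)\le0$ for all $p\in[t,1)$; since $h(t)\le0$ is exactly the hypothesis, it suffices to prove that $h$ is non-increasing on $(0,1)$, for then $h(p)\le h(t)\le0$ and hence $\Delta<p\,h(p)\le0$. Differentiating, $h'(p)=L_{\psi}(1)(1-p^{-2})-a+b\,g'(p)$ with $g(p):=(1-p)\bigl(1+p^{2}+\frac{p^{6}}{1-p^{8}}\bigr)$; the first two summands are negative on $(0,1)$, so it is enough to check $g'(p)\le0$. Using $1-p^{8}=(1-p)(1+p+\cdots+p^{7})$ we have $g(p)=(1-p)(1+p^{2})+\frac{p^{6}}{1+p+\cdots+p^{7}}$, whence $g'(p)=(2p-1-3p^{2})+\frac{d}{dp}\frac{p^{6}}{1+p+\cdots+p^{7}}$. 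Here $2p-1-3p^{2}=-3(p-\tfrac13)^{2}-\tfrac23\le-\tfrac23$, while the remaining derivative is at most $\frac{20p^{5}}{(1+p+\cdots+p^{7})^{2}}\le\frac{20p^{5}}{(1+p+\cdots+p^{5})^{2}}\le\frac59$, the last step because $\frac{(1+p+\cdots+p^{5})^{2}}{p^{5}}$ is decreasing on $(0,1)$ with value $36$ at $p=1$; therefore $g'(p)\le-\tfrac23+\tfrac59<0$.

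The two displays in the first half are routine power-series bookkeeping. The one genuinely delicate point is the monotonicity claim $g'\le0$ on $(0,1)$ (equivalently, that $h$ is non-increasing): it is elementary, but the estimate on $\frac{d}{dp}\frac{p^{6}}{1+p+\cdots+p^{7}}$ must be carried out with some care, and this is exactly the place where the precise expression $1+t^{2}+\frac{t^{6}}{1-t^{8}}$ appearing in the statement is forced.
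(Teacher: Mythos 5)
Your proof is correct and takes essentially the same route as the paper's: drop the displacement term by minimality, bound $\sum_i P(0)^i L_{\psi}(i+1)$ via the identity $\sum_{i\ge1}p^{i}\lfloor\log_{2}(i+1)\rfloor=\frac{1}{1-p}\sum_{k\ge1}p^{2^{k}-1}$ and the same geometric-tail domination, arrive at $\Delta<P(0)\,h(P(0))$, and conclude from monotonicity of $h$ on $(0,1)$. The only difference is that you actually verify the decrease of $(1-x)\bigl(1+x^{2}+\frac{x^{6}}{1-x^{8}}\bigr)$, which the paper merely asserts, and your derivative estimates there check out.
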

\begin{proof}
First, we do some calculations. We obtain
\begin{equation}\label{eq24}
\begin{aligned}
 \sum_{i=1}^\infty P(0)^{i}\lfloor\log_{2}(i+1)\rfloor &=\sum_{n=1}^\infty n\left( \sum_{j=2^{n}-1}^{2^{n+1}-2}P(0)^{j}\right)      \\
                                                       &=\sum_{n=1}^\infty n\cdot\frac{P(0)^{2^{n}-1}-P(0)^{2^{n+1}-1}}{1-P(0)}        \\
                                                       &=\frac{1}{1-P(0)}\sum_{n=1}^{\infty}P(0)^{2^{n}-1},
\end{aligned}
\end{equation}
and
\begin{equation*}
\begin{aligned}
 \sum_{i=0}^\infty P(0)^{i}L_{\psi}(i+1) &= L_{\psi}(1)+ \sum_{i=1}^\infty P(0)^{i}L_{\psi}(i+1)    \\
                                        &\leq  L_{\psi}(1)+ \frac{aP(0)}{1-P(0)}+b\sum_{i=1}^\infty P(0)^{i}\lfloor\log_{2}(i+1)\rfloor       \\
                                        &= L_{\psi}(1)+ \frac{aP(0)}{1-P(0)}+\frac{b}{1-P(0)}\sum_{n=1}^{\infty}P(0)^{2^{n}-1}   \\
                                        &< L_{\psi}(1)+ \frac{aP(0)}{1-P(0)}+\frac{bP(0)}{1-P(0)}\left(1+P(0)^{2}+\sum_{n=0}^{\infty}P(0)^{6+8n} \right)  \\
                                        &= L_{\psi}(1)+ \frac{aP(0)}{1-P(0)}+\frac{bP(0)}{1-P(0)}\left(1+P(0)^{2}+\frac{P(0)^{6}}{1-P(0)^{8}} \right).
\end{aligned}
\end{equation*}
Next, we have
\begin{equation*}
\begin{aligned}
 \Delta &\overset{(c)}{\leq} \Big(1-P(0)\Big)^{2}\sum_{i=0}^\infty P(0)^{i}L_{\psi}(i+1)-P(0)L_{\psi}(1)        \\
        &< \Big(1-P(0)\Big)^{2}\left[L_{\psi}(1)+ \frac{aP(0)}{1-P(0)}+\frac{bP(0)}{1-P(0)}\left(1+P(0)^{2}+\frac{P(0)^{6}}{1-P(0)^{8}} \right) \right]-P(0)L_{\psi}(1)   \\
        &= P(0)\left[L_{\psi}(1)\left(P(0)+\frac{1}{P(0)}-3\right)+a\Big(1-P(0)\Big)+b\Big(1-P(0)\Big)\left(1+P(0)^{2}+\frac{P(0)^{6}}{1-P(0)^{8}}\right)\right]   \\
        &\overset{(d)}{\leq} P(0) \left[L_{\psi}(1)\left(t+\frac{1}{t}-3\right)+a\left(1-t\right)+b\left(1-t\right)\left(1+t^{2}+\frac{t^{6}}{1-t^{8}}\right)\right]   \\
        &\leq 0,
\end{aligned}
\end{equation*}
where $(c)$ is because $\psi$ is minimal, $(d)$ follows the monotonic decrease in $g_{1}(x)=x+\frac{1}{x}-3$, $g_{2}(x)=1-x$ and $g_{3}=(1-x)(1+x^{2}+\frac{x^{6}}{1-x^{8}})$ over interval $(0,1)$.
\end{proof}

To give several instances, we apply some UCIs to Theorem~\ref{thm6}. In the first example, the corresponding parameters of Elias $\gamma$ code are $L_{\gamma}(1)=1$, $a=1$ and $b=2$.
Let $h(x)\triangleq(x+\frac{1}{x}-3)+(1-x)+2(1-x)(1+x^{2}+\frac{x^{6}}{1-x^{8}})$.
We know that $h(0.81)<0$ by calculation.
Thus, when $P(0)\geq0.81$, the coding rate $R_{\mathcal{C}}$ of $\mathcal{C}=(\mathcal{D}_{RLE},\varphi_{\gamma})$ is less than the expected codeword length $E_{P}(L_{\gamma})$ of Elias $\gamma$ code.
In another example, the corresponding parameters of $\iota$ code are $L_{\gamma}(1)=1$, $a=2.5$ and $b=1.5$.
We can know that when $P(0)\geq0.83$, $R_{\mathcal{C}}$ of $\mathcal{C}=(\mathcal{D}_{RLE},\varphi_{\iota})$ is less than $E_{P}(L_{\iota})$ of $\iota$ code.

Note that $R_{\mathcal{C}}$ is less than $E_{P}(L_\psi)$ not only when the entropy is small. 
In other words, $P(0)$ is relatively large, which does not mean that entropy is small.
For example, we consider the probability distribution
\[
 P_{1}=\left(P_{1}(0)=0.9,P_{1}(1)=P_{1}(2)=\cdots P_{1}(n)=\frac{1}{10n}  \right).
\]
Due to Theorem~\ref{thm6}, we obtain $R_{\mathcal{C}}<E_{P_1}(L_{\gamma})$.
However, taking the limit $n\rightarrow +\infty$, the entropy $H(P_{1})=0.1\log_{2}(10n)-0.9\log_{2}0.9$ tends to infinity.
This tells us that when the entropy is large, $R_{\mathcal{C}}$ is still less than $E_{P}(L_{\gamma})$.
But if $P(0)$ is relatively large, a long string of zeros is prone to appear.
Knowing from the structure of $\mathcal{C}=(\mathcal{D}_{RLE},\varphi_{\psi})$, it is reasonable that $R_{\mathcal{C}}$ is less than $E_{P}(L_{\psi})$ at this time.

Finally, we explore the situation when $P(0)$ is not large.
This part needs to be analyzed with a specific UCI. Considering that Elias $\gamma$ code performs best in terms of the expansion factor,
we use Elias $\gamma$ code for analysis.
Due to $L_{\gamma}(n)=1+2\lfloor\log_{2}n\rfloor$ and equation~\eqref{eq24}, equation~\eqref{eq23} can be rewritten as
\begin{equation*}
\begin{aligned}
 \Delta &=\Big(1-P(0)\Big)^{2}\sum_{i=0}^\infty P(0)^{i}\Big(1+2\lfloor\log_{2}(i+1)\rfloor\Big)-P(0)-\sum_{n=1}^{\infty}P(n)\Delta_{\gamma}(n)    \\
        &=1-2P(0)+2\Big(1-P(0)\Big)\sum_{n=1}^{\infty}P(0)^{2^{n}-1}-2\sum_{t=1}^{\infty}P(2^{t}-1).    \\
\end{aligned}
\end{equation*}
Considering the probability distribution
\[
 P_{2}=\left(P_{2}(0)=P_{2}(1)=P_{2}(2)=P_{2}(3)=0.24,P_{2}(4)=P_{2}(5)=\cdots P_{2}(n+3)=\frac{1}{25n}  \right),
\]
we obtain
\begin{equation*}
\begin{aligned}
 \Delta &< 1-2P_{2}(0)+2P_{2}(0)\Big(1-P_{2}(0)\Big)\left(1+P_{2}(0)^{2}+\frac{P_{2}(0)^{6}}{1-P_{2}(0)^{8}}\right)-2P_{2}(1)-2P_{2}(3)   \\
        &= 0.52+0.3648\times\left(1+0.0576+\frac{0.24^{6}}{1-0.24^{8}}\right)-0.96  \\
        &\approx -0.054.    \\
\end{aligned}
\end{equation*}
Taking the limit $n\rightarrow +\infty$, the entropy $H(P_{2})$ tends to infinity.
Therefore, when $P(0)$ is not large, it is still possible that $R_{\mathcal{C}}$ is less than $E_{P}(L_{\psi})$.
Note that from the calculation, it can be seen that the main reason for $\Delta<0$ in this example is the displacement term $-\sum_{n=1}^{\infty}P(n)\Delta_{\psi}(n)$.

In summary, Theorem~\ref{thm6} shows that when $P(0)$ is relatively large, $R_{\mathcal{C}}$ must be less than $E_{P}(L_{\psi})$.
When $P(0)$ is not large, it is difficult to judge whether $\Delta$ is positive or negative.
When the entropy is very large or $P(0)$ is not large, it is still possible that $\Delta$ is less than $0$.

\section{Asymptotically optimal GUCI}\label{sec_dis}
In this section, the asymptotically optimal GUCI is discussed.
First, the formal definition of asymptotically optimal GUCI is given as follows.
\begin{definition}(asymptotically optimal GUCI)
$\mathcal{C}$ is said to be asymptotically optimal GUCI, if $\mathcal{C}$ is a class of GUCI and
there exists a function $T_{\mathcal{C}}(\cdot)$ such that
\begin{equation}\label{eq7}
\frac{R_{\mathcal{C}}}{H(P)}\leq T_{\mathcal{C}}(H(P)),
\end{equation}
for all DPD $P$ with $0<H(P)<\infty$ and
\begin{equation*}
\lim\limits_{H(P)\to +\infty}T_{\mathcal{C}}(H(P))=1.
\end{equation*}
\end{definition}
Then, we give an important property about asymptotically optimal GUCI.
\begin{theorem}\label{thm3}
Let $S=(P,\mathcal{A})$ denote a discrete memoryless source with entropy $0<H(P)<\infty$ and a countable alphabet $\mathcal{A}$. Let $\mathcal{C}=(\mathcal{D}_{RLE},\varphi_{\psi})$ denote the VV code meeting inequality~\eqref{eq5} and UCI $\psi$ is minimal.
If there exists a function $R_{\psi}(\cdot)$ satisfying inequality~\eqref{eq6} and
\begin{equation*}
\lim\limits_{H(P)\to +\infty}R_{\psi}(H(P))= c,
\end{equation*}
where $c$ is constant. Then, there exists a function $T_{\mathcal{C}}(\cdot)$ satisfying inequality~\eqref{eq7} and
\begin{equation*}
\lim\limits_{H(P)\to +\infty}T_{\mathcal{C}}(H(P))=c.
\end{equation*}
\end{theorem}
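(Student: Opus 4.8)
\emph{Proof proposal.} I would build directly on the closed form for the coding rate derived in Section~\ref{length}, writing $R_{\mathcal{C}}=\Theta+\Sigma$ with $\Theta\triangleq\big(1-P(0)\big)^{2}\sum_{i=0}^{\infty}P(0)^{i}L_{\psi}(i+1)$ (the cost of encoding run lengths) and $\Sigma\triangleq\sum_{n=1}^{\infty}P(n)L_{\psi}(n)$. The guiding idea is that $\Theta$ is uniformly bounded, hence negligible after dividing by $H(P)$, while $\Sigma$ is controlled by the already-available redundancy function $R_{\psi}$ of $\psi$. For $\Theta$, feeding the growth bound~\eqref{eq5} into the series and using $\log_{2}(i+1)\leq i$ gives in one line $\Theta\leq a\big(1-P(0)\big)+bP(0)\leq a+b$, so $\Theta/H(P)\to0$ as $H(P)\to\infty$.

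For $\Sigma$, the point is to compare it with the expected codeword length $E_{P}(L_{\psi})=\sum_{n=0}^{\infty}P(n)L_{\psi}(n+1)$ of $\psi$. Here minimality of $\psi$ is exactly what is needed: from $L_{\psi}(n)\leq L_{\psi}(n+1)$ we get $\Sigma=\sum_{n\geq1}P(n)L_{\psi}(n)\leq\sum_{n\geq1}P(n)L_{\psi}(n+1)\leq E_{P}(L_{\psi})$, the last inequality just dropping the nonnegative term $P(0)L_{\psi}(1)$. Applying the hypothesis~\eqref{eq6} to $\psi$ then yields $E_{P}(L_{\psi})\leq R_{\psi}\big(H(P)\big)\max\{1,H(P)\}$, so that for every DPD $P$ with $H(P)\geq1$,
\[
 \frac{R_{\mathcal{C}}}{H(P)}=\frac{\Theta+\Sigma}{H(P)}\leq\frac{a+b}{H(P)}+R_{\psi}\big(H(P)\big).
\]

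It remains to package this into a single function. I would take
\[
 T_{\mathcal{C}}(h)\triangleq\begin{cases} 2a+b & \text{if }0<h<1,\\[2pt] \dfrac{a+b}{h}+R_{\psi}(h) & \text{if }h\geq1, \end{cases}
\]
and verify~\eqref{eq7}: for $0<H(P)<1$ it is Theorem~\ref{thm2} with the constants of~\eqref{eq5}, and for $H(P)\geq1$ it is the display above. Since $\Theta$ is bounded, $\lim_{h\to\infty}T_{\mathcal{C}}(h)=0+\lim_{h\to\infty}R_{\psi}(h)=c$, which is the assertion.

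The one step that actually does some work is the comparison $\Sigma\leq E_{P}(L_{\psi})$: this is where minimality of $\psi$ is used, and crucially it lets $R_{\psi}$ be transferred to $\mathcal{C}$ \emph{at the same argument} $H(P)$, so that no monotonicity or boundedness of $R_{\psi}$ need be assumed. (If one dropped minimality, the natural substitute is to split $P$ into $P(0)$ and the renormalized tail $Q(n)=P(n)/\big(1-P(0)\big)$, use the entropy splitting $H(P)=h_{2}\big(P(0)\big)+\big(1-P(0)\big)H(Q)$ with $h_{2}$ the binary entropy, and apply~\eqref{eq6} to $Q$; but then $R_{\psi}$ must be replaced by a non-increasing majorant, which is the genuinely delicate part of that alternative route.)
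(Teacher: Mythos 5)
Your proposal is correct and follows essentially the same route as the paper's proof: the same closed-form decomposition of $R_{\mathcal{C}}$ into the run-length-cost term plus a term that minimality lets you bound by $E_{P}(L_{\psi})$, then inequality~\eqref{eq6} for $H(P)\geq1$, Theorem~\ref{thm2} for $H(P)<1$, and the same piecewise definition of $T_{\mathcal{C}}$. Your only deviation is bounding the run-length term by $a\big(1-P(0)\big)+bP(0)\leq a+b$ via $\log_{2}(i+1)\leq i$, a slightly cleaner explicit constant than the paper's bound $\sum_{i=0}^{n_{0}-1}L_{\psi}(i+1)+1$ obtained through an auxiliary $n_{0}$ with $L_{\psi}(n)\leq n$.
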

\begin{proof}
From equation~\eqref{eq23}, we have
\[
 \frac{R_{\mathcal{C}}}{H(P)}=\frac{\Delta+E_{P}(L_{\psi})}{H(P)},
\]
where
\[
  \Delta = \Big(1-P(0)\Big)^{2}\sum_{i=0}^\infty P(0)^{i}L_{\psi}(i+1)-P(0)L_{\psi}(1)-\sum_{n=1}^{\infty}P(n)\Delta_{\psi}(n).
\]
From inequality~\eqref{eq5}, we know that there exists an integer $n_{0}$ such that
\begin{equation}\label{eq8}
L_{\psi}(n)\leq n, \text{ for } n_{0}\leq n\in \mathbb{N}^{+}.
\end{equation}
From inequality~\eqref{eq8} and $\Delta_{\psi}(n)\geq0$, for all $n\in\mathbb{N}$, we obtain
\begin{equation*}
\begin{aligned}
 \Delta & < \Big(1-P(0)\Big)^{2}\sum_{i=0}^\infty P(0)^{i}L_{\psi}(i+1)   \\
        &< \sum_{i=0}^{n_{0}-1}L_{\psi}(i+1)+\Big(1-P(0)\Big)^{2}\sum_{i=n_{0}}^\infty P(0)^{i}(i+1)    \\
        &= \sum_{i=0}^{n_{0}-1}L_{\psi}(i+1)+\Big(n_{0}+1-n_{0}P(0)\Big)P(0)^{n_{0}}    \\
        & \overset{(a)}{\leq} \sum_{i=0}^{n_{0}-1}L_{\psi}(i+1)+1,
\end{aligned}
\end{equation*}
where $(a)$ is because $f(x)=(n_{0}+1-n_{0}x)x^{n_{0}}$ is strictly monotonically increasing over
the interval $(0,1)$ by calculating the derivative.
Further, when $H(P)\geq1$, we obtain
\begin{equation*}
\begin{aligned}
 \frac{R_{\mathcal{C}}}{H(P)} & = \frac{\Delta+E_{P}(L_{\psi})}{H(P)}   \\
                              & < \frac{\sum_{i=0}^{n_{0}-1}L_{\psi}(i+1)+1}{H(P)}+ R_{\psi}(H(P)).
\end{aligned}
\end{equation*}
When $H(P)<1$, we have $\frac{R_{\mathcal{C}}}{H(P)}\leq 2a+b$ due to Theorem~\ref{thm2}.
We define
\[
T_{\mathcal{C}}(H(P))\triangleq\left\{\begin{array}{ll}
2a+b,                         &\text{if } H(P)<1,\\
V(H(P)),       &\text{if } H(P)\geq1,\\
\end{array}\right.
\]
where $V(H(P))\triangleq \frac{\sum_{i=0}^{n_{0}-1}L_{\psi}(i+1)+1}{H(P)}+ R_{\psi}(H(P))$.
And hence, we obtain $\frac{R_{\mathcal{C}}}{H(P)}\leq T_{\mathcal{C}}(H(P))$ and
\begin{equation*}
\begin{aligned}
   &\lim\limits_{H(P)\to +\infty}T_{\mathcal{C}}(H(P))  \\
  =& \lim\limits_{H(P)\to +\infty}\frac{\sum_{i=0}^{n_{0}-1}L_{\psi}(i+1)+1}{H(P)}+ \lim\limits_{H(P)\to +\infty} R_{\psi}(H(P))    \\
  =& \ c.
\end{aligned}
\end{equation*}
\end{proof}
Finally, we give the theorem of the relationship between the asymptotically optimal UCI and the asymptotically optimal GUCI.
\begin{theorem}\label{thm4}
For any discrete memoryless source $S=(P,\mathcal{A})$ with entropy $0<H(P)<\infty$ and a countable alphabet $\mathcal{A}$,
the VV code $\mathcal{C}=(\mathcal{D}_{RLE},\varphi_{\psi})$ meets inequality~\eqref{eq5}
and UCI $\psi$ is minimal and asymptotically optimal. Then, $\mathcal{C}$ is asymptotically optimal GUCI.
\end{theorem}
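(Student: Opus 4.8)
The plan is to recognize Theorem~\ref{thm4} as an essentially immediate consequence of Theorems~\ref{thm2} and~\ref{thm3}, so the real work is only in lining up the hypotheses. First I would unpack what "asymptotically optimal UCI" gives us: by the definition attached to~\eqref{eq6}, since $\psi$ is asymptotically optimal there is a function $R_{\psi}(\cdot)$ with $E_{P}(L_{\psi})/\max\{1,H(P)\}\le R_{\psi}(H(P))$ for every DPD $P$ with $H(P)<\infty$, and $\lim_{H(P)\to+\infty}R_{\psi}(H(P))=1$. In particular $R_{\psi}$ converges to a \emph{constant} as $H(P)\to+\infty$, namely $c=1$, which is exactly the shape of hypothesis required by Theorem~\ref{thm3}.

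Next I would check the remaining hypotheses of Theorem~\ref{thm3} for $\mathcal{C}=(\mathcal{D}_{RLE},\varphi_{\psi})$: Theorem~\ref{thm3} asks that $\mathcal{C}$ meet inequality~\eqref{eq5} and that $\psi$ be minimal, and both are assumed outright in the statement of Theorem~\ref{thm4}. Invoking Theorem~\ref{thm3} with $c=1$ then produces a function $T_{\mathcal{C}}(\cdot)$ with $R_{\mathcal{C}}/H(P)\le T_{\mathcal{C}}(H(P))$ for all DPD $P$ with $0<H(P)<\infty$ and $\lim_{H(P)\to+\infty}T_{\mathcal{C}}(H(P))=1$. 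That is precisely inequality~\eqref{eq7}, with the correct limiting value $1$.

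Finally, to conclude that $\mathcal{C}$ is an \emph{asymptotically optimal} GUCI I still need $\mathcal{C}$ to be a GUCI at all, i.e.\ to have a finite expansion factor. This is delivered by Theorem~\ref{thm2}: since $\mathcal{C}$ meets~\eqref{eq5} with constants $a,b$, we get $R_{\mathcal{C}}/H(P)\le 2a+b$ for all DPD $P$, so $K_{\mathcal{C}}\le 2a+b<\infty$ and $\mathcal{C}$ satisfies Definition~\ref{def:guci}. Combining this with the function $T_{\mathcal{C}}$ obtained above, $\mathcal{C}$ satisfies every clause of the definition of asymptotically optimal GUCI, which finishes the proof.

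I do not expect a genuine obstacle here; the only point needing care is matching the constant $c$ in the hypothesis of Theorem~\ref{thm3} to the value $c=1$ coming from asymptotic optimality of $\psi$, and verifying that "$\psi$ minimal" and "$\mathcal{C}$ meets~\eqref{eq5}" are really on hand — both of which are part of the theorem's hypotheses. One could add the remark that every asymptotically optimal UCI proposed in the literature in fact satisfies~\eqref{eq5}, so the assumption is mild, but that is a comment rather than a step of the argument.
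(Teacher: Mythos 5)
Your proposal is correct and follows essentially the same route as the paper's own proof: invoke Theorem~\ref{thm2} to get that $\mathcal{C}$ is a GUCI, then apply Theorem~\ref{thm3} with $c=1$ (coming from asymptotic optimality of $\psi$) to obtain $T_{\mathcal{C}}$ with limit $1$, which is exactly the definition of asymptotically optimal GUCI. No gaps; your extra care in matching the hypotheses of Theorem~\ref{thm3} is just a more explicit version of the same argument.
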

\begin{proof}
From Theorem~\ref{thm2}, we know that $\mathcal{C}=(\mathcal{D}_{RLE},\varphi_{\psi})$ is GUCI.
Due to Theorem~\ref{thm3} and
\[
   \lim\limits_{H(P)\to +\infty}R_{\psi}(H(P))=1,
\]
we obtain
\[
 \lim\limits_{H(P)\to +\infty}T_{\mathcal{C}}(H(P))=1.
\]
Therefore, $\mathcal{C}$ is asymptotically optimal GUCI.
\end{proof}

\section{Conclusions}\label{sec_con}
In this paper, GUCI is proposed to solve the issue of UCI that the ratio of the expected codeword length to $H(P)$ cannot be within a constant factor $K$ when $H(P)$ is extremely small.
We construct a VV code $\mathcal{C}=(\mathcal{D}_{RLE},\varphi_{\psi})$ through RLE and UCI $\psi$,
and we proved that $\mathcal{C}$ is GUCI or asymptotically optimal GUCI when UCI $\psi$ satisfies certain conditions.
We propose a class of GUCI $\mathcal{C}=(\mathcal{D}_{RLE},\varphi_{\gamma})$ to achieve the expansion factor $K_{\mathcal{C}}=2$ and show that the optimal GUCI is in the range $1\leq K_{\mathcal{C}}^{*}\leq 2$.
GUCI is suitable for small entropy. For example, in image compression, the frequency-domain coefficients have many zeros after
the quantization process~\cite{2015A}.
Furthermore, when the entropy is very large or $P(0)$ is not large, it is still possible that the coding rate $R_{\mathcal{C}}$ is less than the expected codeword length $E_{P}(L_{\psi})$.

\appendix
The proof of Lemma~\ref{lemma1} given in the Appendix is an extension of the proof in~\cite{FK72,LT91}.
Before concrete proof, let us give some definitions.
Suppose $\mathcal{D}$ is a  dictionary. For every $\alpha\in\mathcal{D}$, then $\mathcal{D}[\alpha]\triangleq (\mathcal{D}-\{\alpha\})\cup\alpha\mathcal{A}$ is also a dictionary,
where $\alpha\mathcal{A}\triangleq\{\alpha\beta\mid\beta\in\mathcal{A}\}$.
Dictionary $\mathcal{D}[\alpha]$ is said to be an extension of dictionary $\mathcal{D}$, and $\alpha$ is termed the extending string.
Obviously, when $\mathcal{D}$ is proper and complete, then $\mathcal{D}[\alpha]$ is also proper and complete.
Let $\mathcal{D}_{n}$ denote a proper and complete dictionary as follows.
\[
\mathcal{D}_{n}\triangleq
\{ \alpha\in\mathcal{D} \mid |\alpha|<n \}\cup \mathcal{D}_{n}^{\perp},
\]
where the length of the elements in $\mathcal{D}^{\perp}$ are all $n\in\mathbb{N}^{+}$, and $\mathcal{D}_{n}^{\perp}$ makes the dictionary $\mathcal{D}_{n}$ is proper and complete.
In particular, $\mathcal{D}_{1}=\mathcal{A}$.
\begin{lemma}(Lemma~\ref{lemma1} Restated)
	Let $S=(P,\mathcal{A})$ denote a discrete memoryless source with entropy $H(P)<\infty$ and a countable alphabet $\mathcal{A}$.
	Given a VV code $\mathcal{C}$ with a proper and ASC dictionary $\mathcal{D}$, than
	\begin{equation}\label{e11}
		H(\mathcal{D})= H(P)\overline{l(\mathcal{D})},
	\end{equation}
	where $H(\mathcal{D})=-\sum_{\alpha\in\mathcal{D}}P(\alpha)\log_{2}P(\alpha)$ and $\overline{l(\mathcal{D})}=\sum_{\alpha\in\mathcal{D}}P(\alpha)|\alpha|$.
\end{lemma}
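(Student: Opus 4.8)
The plan is to prove \eqref{e11} first for each finite‑depth truncation $\mathcal{D}_n$ and then let $n\to\infty$, using that $\mathcal{D}$ is ASC to control the auxiliary part $\mathcal{D}_n^{\perp}$. The base case $n=1$ is immediate: $\mathcal{D}_1=\mathcal{A}$, so $H(\mathcal{D}_1)=H(P)$ and $\overline{l(\mathcal{D}_1)}=1$, and both sides of \eqref{e11} equal $H(P)<\infty$.

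The mechanism is that \eqref{e11} is invariant under a single extension $\mathcal{D}\mapsto\mathcal{D}[\alpha]$. Extending $\alpha$ replaces the word $\alpha$ (probability $P(\alpha)$) by the words $\alpha a$, $a\in\mathcal{A}$ (probability $P(\alpha)P(a)$ each); using $\sum_a P(a)=1$ and $H(P)=-\sum_a P(a)\log_2 P(a)$ one computes directly that
\[
H(\mathcal{D}[\alpha])-H(\mathcal{D})=P(\alpha)H(P),\qquad \overline{l(\mathcal{D}[\alpha])}-\overline{l(\mathcal{D})}=P(\alpha),
\]
so the increment of $H(\cdot)$ equals that of $H(P)\overline{l(\cdot)}$ and the identity propagates; since $-\log_2 P(a)\ge 0$, each increment is a sum of nonnegative terms, so this is legitimate for a countably infinite $\mathcal{A}$. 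Now $\mathcal{D}_n$ is obtained from $\mathcal{D}_{n-1}$ by extending, all at once, every $\beta\in\mathcal{D}_{n-1}^{\perp}\setminus\mathcal{D}$ (equivalently, every length‑$(n-1)$ string no prefix of which lies in $\mathcal{D}$ and which is not itself a word of $\mathcal{D}$). Summing the displayed increments over this countable pairwise‑disjoint family — which is termwise finite because the inductive hypothesis forces $H(\mathcal{D}_{n-1})=H(P)\overline{l(\mathcal{D}_{n-1})}\le (n-1)H(P)<\infty$ — gives $H(\mathcal{D}_n)=H(P)\overline{l(\mathcal{D}_n)}$ for every $n$.

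For the limit, write $\overline{l(\mathcal{D}_n)}=\sum_{\alpha\in\mathcal{D},\,|\alpha|<n}P(\alpha)|\alpha|+n\,q_n$ with $q_n\triangleq\sum_{\beta\in\mathcal{D}_n^{\perp}}P(\beta)$. Here the ASC hypothesis is used twice: the total mass of those strings in $\mathcal{D}_n^{\perp}$ that are prefixes of no word of $\mathcal{D}$ is $0$, so $q_n=\sum_{\alpha\in\mathcal{D},\,|\alpha|\ge n}P(\alpha)\to 0$; and for every length‑$n$ string $\gamma$ that is a prefix of some word of $\mathcal{D}$, the mass of the words of $\mathcal{D}$ having $\gamma$ as a prefix equals $P(\gamma)$. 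Assuming $\overline{l(\mathcal{D})}<\infty$, the bound $n\,q_n\le\sum_{\alpha\in\mathcal{D},\,|\alpha|\ge n}|\alpha|P(\alpha)\to 0$ yields $\overline{l(\mathcal{D}_n)}\to\overline{l(\mathcal{D})}$ by monotone convergence. Regrouping the series defining $H(\mathcal{D})$ according to length‑$n$ prefixes and invoking the second ASC identity simultaneously gives $H(\mathcal{D}_n)\le H(\mathcal{D})$ and $H(\mathcal{D}_n)\ge\sum_{\alpha\in\mathcal{D},\,|\alpha|<n}P(\alpha)\bigl(-\log_2 P(\alpha)\bigr)$, so combining with the per‑level identity produces
\[
\sum_{\alpha\in\mathcal{D},\,|\alpha|<n}P(\alpha)\bigl(-\log_2 P(\alpha)\bigr)\ \le\ H(P)\,\overline{l(\mathcal{D}_n)}\ \le\ H(\mathcal{D}),
\]
and letting $n\to\infty$ squeezes $H(P)\overline{l(\mathcal{D})}$ between $H(\mathcal{D})$ and $H(\mathcal{D})$. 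The remaining cases are quick: if $\overline{l(\mathcal{D})}=\infty$ then $H(\mathcal{D}_n)=H(P)\overline{l(\mathcal{D}_n)}\to\infty$ (for $H(P)>0$) forces $H(\mathcal{D})=\infty$, so both sides of \eqref{e11} are infinite; and $H(P)=0$ forces $P$ to be a point mass, making both sides $0$.

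The step I expect to be the real obstacle is this passage to the limit, and within it the estimate $n\,q_n\to 0$: it is precisely here that almost‑sure‑completeness is indispensable (it annihilates the non‑prefix part of $\mathcal{D}_n^{\perp}$) and where $\overline{l(\mathcal{D})}<\infty$ must be brought in, while the prefix‑regrouping that yields the two‑sided bound lets one avoid a direct and messier estimate of $\sum_{\beta\in\mathcal{D}_n^{\perp}}P(\beta)(-\log_2 P(\beta))$. Keeping all the infinite sums honest for a countable alphabet — every summand in sight is nonnegative, so Tonelli applies and the value $+\infty$ is permitted — is the other point requiring care relative to the finite‑alphabet, complete‑dictionary treatments of~\cite{FK72,LT91}. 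Conceptually, the whole computation is the Wald‑type identity $H(\mathcal{D})=H(P)\,\mathbb{E}[N]$ for $N$ the length of the first parsed word, the prefix‑regrouping being exactly the chain‑rule decomposition behind it.
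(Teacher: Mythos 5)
Your proposal is correct and takes essentially the same route as the paper's appendix proof: establish that the identity $H(\cdot)=H(P)\,\overline{l(\cdot)}$ is invariant under single extensions $\mathcal{S}\mapsto\mathcal{S}[\alpha]$, propagate it by induction to the truncated dictionaries $\mathcal{D}_n$, and then pass to the limit $n\to\infty$ using the ASC property through prefix regrouping (the bounds $H(\mathcal{D}_n)\le H(\mathcal{D})$, $\overline{l(\mathcal{D}_n)}\to\overline{l(\mathcal{D})}$). The only differences are cosmetic: you justify the countably many extensions per level by nonnegative-series rearrangement where the paper takes an explicit sub-limit over finite partial extensions, and you treat $\overline{l(\mathcal{D})}=\infty$ and $H(P)=0$ as separate easy cases, which the paper's two-sided sandwich handles implicitly.
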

\begin{proof}
The proof is divided into three parts.
First, we prove that if the dictionary satisfies equation \eqref{e11}, then the dictionary after finite extensions also satisfies equation \eqref{e11}.
Next, the following equation will be proved.
\begin{equation}\label{e12}
	H(\mathcal{D}_{n})= H(P)\overline{l(\mathcal{D}_{n})},
\end{equation}
for all $n\in\mathbb{N}^{+}$.
Finally, the proof for equation \eqref{e11} will be presented.
\begin{enumerate}
	\item Suppose $\mathcal{S}$ is a  dictionary.
	By recursion, we only need to prove that when $\mathcal{S}$ satisfies equation \eqref{e11}, then $\mathcal{S}[\alpha]$ after one extension also satisfies equation \eqref{e11}.
	We obtain
	\begin{equation*}
		\begin{aligned}
			\overline{l(\mathcal{S}[\alpha])}&= \sum_{\beta\in\mathcal{S}-\{\alpha\}}P(\beta)|\beta|+ \sum_{\beta\in\alpha\mathcal{A}}P(\beta)|\beta|                             \\
			&= \sum_{\beta\in\mathcal{S}}P(\beta)|\beta|-P(\alpha)|\alpha|+ P(\alpha)(|\alpha|+1)     \\
			&= \overline{l(\mathcal{S})}+P(\alpha),
		\end{aligned}
	\end{equation*}
	and
	\begin{equation*}
		\begin{aligned}
			H(\mathcal{S}[\alpha])&= -\sum_{\beta\in\mathcal{S}-\{\alpha\}}P(\beta)\log_{2}P(\beta)-\sum_{\beta\in\alpha\mathcal{A}}P(\beta)\log_{2}P(\beta)                                \\
			&= -\sum_{\beta\in\mathcal{S}}P(\beta)\log_{2}P(\beta)+P(\alpha)\log_{2}P(\alpha)-\sum_{\beta\in\mathcal{A}}P(\alpha)P(\beta)\log_{2}P(\alpha)P(\beta)      \\
			&= H(\mathcal{S})+P(\alpha)\log_{2}P(\alpha)-P(\alpha)\log_{2}P(\alpha)-P(\alpha)\sum_{\beta\in\mathcal{A}}P(\beta)\log_{2}P(\beta)      \\
			&= H(P)\overline{l(\mathcal{S})}+H(P)P(\alpha)    \\
			&= H(P)\overline{l(\mathcal{S}[\alpha])}.
		\end{aligned}
	\end{equation*}
	The first part of the proof is complete.
	\item
	We prove equation \eqref{e12} by mathematical induction.
	When $n=1$, we have $\mathcal{D}_{1}=\mathcal{A}$ and
	\[
	H(P)\overline{l(\mathcal{D}_{1})}=H(P)\times1=H(\mathcal{D}_{1}).
	\]
	Suppose equation \eqref{e12} holds when $n=m$.
	Now, we consider the extension process from $\mathcal{D}_{m}$ to $\mathcal{D}_{m+1}$.
	Let
	\[
	T\triangleq\{\alpha\in\mathcal{D}_{m}^{\perp}\mid \alpha \notin\mathcal{D} \}.
	\]
	Due to the definition of $\mathcal{D}_{m}$, we know that $T$ is exactly the set of all extending strings from $\mathcal{D}_{m}$ to $\mathcal{D}_{m+1}$.
	If $|T|<\infty$, then $\mathcal{D}_{m+1}$ is obtained by $\mathcal{D}_{m}$ after finite extensions.
	We obtain $H(\mathcal{D}_{m+1})= H(P)\overline{l(\mathcal{D}_{m+1})}$ due to the first part of the proof.
	If $|T|=\infty$, because $\mathcal{A}$ is countable and the length of the elements in $T$ are all $m$, $T$ is also countable.
	Therefore, suppose $T\triangleq\{\alpha_{i}\}_{i=1}^{\infty}$, the extension process from $\mathcal{D}_{m}$ to $\mathcal{D}_{m+1}$ is as follows.
	\begin{equation*}
		\begin{aligned}
			\mathcal{D}_{m+1,1}\triangleq &(\mathcal{D}_{m}-\{\alpha_{1}\})\cup \alpha_{1}\mathcal{A},                     \\
			\mathcal{D}_{m+1,2}\triangleq & (\mathcal{D}_{m+1,1}-\{\alpha_{2}\})\cup \alpha_{2}\mathcal{A}= (\mathcal{D}_{m}-\{\alpha_{i}\}_{i=1}^{2})\cup \{\alpha_{i}\mathcal{A}\}_{i=1}^{2},     \\
			\vdots &  \\
			\mathcal{D}_{m+1,k} \triangleq &  (\mathcal{D}_{m+1,k-1}-\{\alpha_{k}\})\cup \alpha_{k}\mathcal{A}= (\mathcal{D}_{m}-\{\alpha_{i}\}_{i=1}^{k})\cup \{\alpha_{i}\mathcal{A}\}_{i=1}^{k},     \\
			\vdots &
		\end{aligned}
	\end{equation*}
	We have the following three equations.
	\begin{enumerate}
		\item[(\romannumeral1)] $H(\mathcal{D}_{m+1,k})= H(P)\overline{l(\mathcal{D}_{m+1,k})}$, for all $k\in\mathbb{N}^{+}$.
		\item[(\romannumeral2)] $\lim\limits_{k\to +\infty}\mathcal{D}_{m+1,k}=\mathcal{D}_{m+1}$.
		\item[(\romannumeral3)] $\mathcal{D}_{m+1}=(\mathcal{D}_{m}-\{\alpha_{i}\}_{i=1}^{\infty})\cup \{\alpha_{i}\mathcal{A}\}_{i=1}^{\infty}$.
	\end{enumerate}
	Next, we prove the following two equations.
	\begin{equation}\label{e13}
		\begin{aligned}
			\lim\limits_{k\to +\infty}\overline{l(\mathcal{D}_{m+1,k})}&=\overline{l(\mathcal{D}_{m+1})}.            \\
			\lim\limits_{k\to +\infty} H(\mathcal{D}_{m+1,k})&=H(\mathcal{D}_{m+1}).
		\end{aligned}
	\end{equation}
	First, we have
	\begin{equation*}
		\begin{aligned}
			\overline{l(\mathcal{D}_{m+1})}& = \sum_{\alpha\in\mathcal{D}_{m+1}}P(\alpha)|\alpha|                               \\
			&\geq \sum_{\alpha\in\mathcal{D}_{m+1,k}}P(\alpha)|\alpha|       \\
			& = \overline{l(\mathcal{D}_{m+1,k})}      \\
			&\geq \sum_{\alpha\in(\mathcal{D}_{m}-\{\alpha_{i}\}_{i=1}^{\infty})\cup\{\alpha_{i}\mathcal{A}\}_{i=1}^{k}}P(\alpha)|\alpha|.
		\end{aligned}
	\end{equation*}
	Taking the limit $k\rightarrow\infty$, we obtain
	\begin{equation*}
		\begin{aligned}
			\overline{l(\mathcal{D}_{m+1})}&\geq \lim\limits_{k\to +\infty} \overline{l(\mathcal{D}_{m+1,k})}     \\
			&\geq \lim\limits_{k\to +\infty} \sum_{\alpha\in(\mathcal{D}_{m}-\{\alpha_{i}\}_{i=1}^{\infty})\cup\{\alpha_{i}\mathcal{A}\}_{i=1}^{k}}P(\alpha)|\alpha|  \\
			&=\overline{l(\mathcal{D}_{m+1})}.
		\end{aligned}
	\end{equation*}
	Then, we have
	\begin{equation*}
		\begin{aligned}
			H(\mathcal{D}_{m+1})&= -\sum_{\alpha\in(\mathcal{D}_{m}-\{\alpha_{i}\}_{i=1}^{\infty})\cup \{\alpha_{i}\mathcal{A}\}_{i=1}^{k}}P(\alpha)\log_{2}P(\alpha)-\sum_{\alpha\in\{\alpha_{i}\mathcal{A}\}_{i=k+1}^{\infty}}P(\alpha)\log_{2}P(\alpha)       \\
			&\overset{(a)}{\geq}-\sum_{\alpha\in(\mathcal{D}_{m}-\{\alpha_{i}\}_{i=1}^{\infty})\cup \{\alpha_{i}\mathcal{A}\}_{i=1}^{k}}P(\alpha)\log_{2}P(\alpha)-\sum_{\alpha\in\{\alpha_{i}\}_{i=k+1}^{\infty}}P(\alpha)\log_{2}P(\alpha)       \\
			&=-\sum_{\alpha\in(\mathcal{D}_{m}-\{\alpha_{i}\}_{i=1}^{k})\cup \{\alpha_{i}\mathcal{A}\}_{i=1}^{k}}P(\alpha)\log_{2}P(\alpha)       \\
			&= H(\mathcal{D}_{m+1,k})    \\
			&\geq -\sum_{\alpha\in(\mathcal{D}_{m}-\{\alpha_{i}\}_{i=1}^{\infty})\cup \{\alpha_{i}\mathcal{A}\}_{i=1}^{k}}P(\alpha)\log_{2}P(\alpha),
		\end{aligned}
	\end{equation*}
	where $(a)$ is due to $-\sum_{\alpha\in\alpha_{i}\mathcal{A}}P(\alpha)\log_{2}P(\alpha)\geq -P(\alpha_{i})\log_{2}P(\alpha_{i})$, for all $i\in\mathbb{N}^{+}$.
	Taking the limit $k\rightarrow\infty$, we obtain
	\begin{equation*}
		\begin{aligned}
			H(\mathcal{D}_{m+1})&\geq \lim\limits_{k\to +\infty} H(\mathcal{D}_{m+1,k})    \\
			&\geq \lim\limits_{k\to +\infty} -\sum_{\alpha\in(\mathcal{D}_{m}-\{\alpha_{i}\}_{i=1}^{\infty})\cup \{\alpha_{i}\mathcal{A}\}_{i=1}^{k}}P(\alpha)\log_{2}P(\alpha) \\
			&=H(\mathcal{D}_{m+1}).
		\end{aligned}
	\end{equation*}
	Equation \eqref{e13} is proved. From the perspective of mathematical analysis, equation \eqref{e13} essentially considers whether the function and the limit can be exchanged.
	For example, $\lim\limits_{k\to +\infty} H(\mathcal{D}_{m+1,k})=H(\lim\limits_{k\to +\infty}\mathcal{D}_{m+1,k})$.
	Finally, form equation \eqref{e13}, we have
	\begin{equation*}
		\begin{aligned}
			H(\mathcal{D}_{m+1})&= \lim\limits_{k\to +\infty} H(\mathcal{D}_{m+1,k})       \\
			&= \lim\limits_{k\to +\infty} H(P)\overline{l(\mathcal{D}_{m+1,k})}   \\
			&= H(P)\overline{l(\mathcal{D}_{m+1})} .
		\end{aligned}
	\end{equation*}
	The second part of the proof is complete.
	\item
	We prove the following two equations similar to equation \eqref{e13}.
	\begin{equation}\label{e14}
		\begin{aligned}
			\lim\limits_{m\to +\infty}\overline{l(\mathcal{D}_{m})}&=\overline{l(\mathcal{D})}.            \\
			\lim\limits_{m\to +\infty} H(\mathcal{D}_{m})&=H(\mathcal{D}).
		\end{aligned}
	\end{equation}
	First, we have
	\begin{equation*}
		\begin{aligned}
			\overline{l(\mathcal{D})}& = \sum_{\alpha\in\mathcal{D}}P(\alpha)|\alpha|                               \\
			&\geq \sum_{\alpha\in\mathcal{D},|\alpha|<m}P(\alpha)|\alpha|+ \sum_{\alpha\in\mathcal{D}_{m}^{\perp}}P(\alpha)|\alpha|      \\
			& = \overline{l(\mathcal{D}_{m})}      \\
			&\geq \sum_{\alpha\in\mathcal{D},|\alpha|<m}P(\alpha)|\alpha|.
		\end{aligned}
	\end{equation*}
	Taking the limit $m\rightarrow\infty$, we obtain
	\begin{equation*}
		\begin{aligned}
			\overline{l(\mathcal{D})}&\geq \lim\limits_{m\to +\infty} \overline{l(\mathcal{D}_{m})}     \\
			&\geq \lim\limits_{m\to +\infty}\sum_{\alpha\in\mathcal{D},|\alpha|<m}P(\alpha)|\alpha|  \\
			&=\overline{l(\mathcal{D})}.
		\end{aligned}
	\end{equation*}
	Then, we have
	\begin{equation*}
		\begin{aligned}
			H(\mathcal{D})&= -\sum_{\alpha\in\mathcal{D},|\alpha|<m}P(\alpha)\log_{2}P(\alpha)-\sum_{\alpha\in\mathcal{D},|\alpha|\geq m}P(\alpha)\log_{2}P(\alpha)       \\
			& \overset{(a)}{\geq}-\sum_{\alpha\in\mathcal{D},|\alpha|<m}P(\alpha)\log_{2}P(\alpha)-\sum_{\beta\in\mathcal{D}^{\perp}}P(\beta)\log_{2}P(\beta)        \\
			&= H(\mathcal{D}_{m})    \\
			&\geq -\sum_{\alpha\in\mathcal{D},|\alpha|<m}P(\alpha)\log_{2}P(\alpha),
		\end{aligned}
	\end{equation*}
	where $(a)$ is due to
	\[
	-\sum_{\substack{\hbox{$\beta$ is the prefix of $\alpha$},\\\alpha\in\mathcal{D}}}P(\alpha)\log_{2}P(\alpha)\geq -P(\beta)\log_{2}P(\beta),
	\]
	for all $\beta\in\mathcal{D}^{\perp}$.
	Taking the limit $m\rightarrow\infty$, we obtain
	\begin{equation*}
		\begin{aligned}
			H(\mathcal{D})&\geq \lim\limits_{m\to +\infty} H(\mathcal{D}_{m})    \\
			&\geq \lim\limits_{m\to +\infty} -\sum_{\alpha\in\mathcal{D},|\alpha|<m}P(\alpha)\log_{2}P(\alpha) \\
			&=H(\mathcal{D}).
		\end{aligned}
	\end{equation*}
	Equation \eqref{e14} is proved.
	Finally, form equation \eqref{e14}, we have
	\begin{equation*}
		\begin{aligned}
			H(\mathcal{D})&= \lim\limits_{m\to +\infty} H(\mathcal{D}_{m})       \\
			&= \lim\limits_{m\to +\infty} H(P)\overline{l(\mathcal{D}_{m})}   \\
			&= H(P)\overline{l(\mathcal{D})} .
		\end{aligned}
	\end{equation*}
\end{enumerate}
The proof is completed.
\end{proof}
\bibliographystyle{IEEEtran}
\bibliography{IEEEabrv,refs}
\end{document}